\newcommand{\N}{\mathbb{N}}
\newcommand{\R}{\mathbb{R}}
\newcommand{\Var}{\mathbb V\mathbf{ar}}
\newcommand{\E}{\mathbb{E}}
\newcommand{\Cov}{\mathbb C\mathbf{ov}}
\DeclareMathOperator*{\argmin}{argmin}
\providecommand{\norm}[1]{\left\lVert#1\right\rVert}
\providecommand{\abs}[1]{\left|#1\right|}
\newtheorem{defn}{Definition}
\newtheorem{prop}{Proposition}
\newtheoremstyle{break}
  {\topsep}{\topsep}%
  {\itshape}{}%
  {\bfseries}{}%
  {\newline}{}%
\newtheorem{thm}{Theorem}
\newtheorem{cor}{Corollary}
\newtheorem{lem}{Lemma}
\newtheorem*{rem}{Remark}
\begin{document}

\title{{Randomized pick-freeze for sparse Sobol indices estimation in high dimension}}
\footnotetext[1]{Laboratoire de Math\'ematiques d'Orsay, B\^atiment 425, Universit\'e Paris-Sud, 91405 Orsay, France}
\author{\renewcommand{\thefootnote}{\arabic{footnote}}Yohann de Castro\footnotemark[1]{\ } and Alexandre Janon\footnotemark[1]}


\maketitle

\begin{abstract}
This article investigates a new procedure to estimate the influence of each variable of a given function defined on a high-dimensional space. More precisely, we are concerned with describing a function of a large number $p$ of parameters that depends only on a small number $s$ of them. Our proposed method is an unconstrained $\ell_{1}$-minimization based on the Sobol's method. We prove that, with only $\mathcal O(s\log p)$ evaluations of $f$, one can find which are the relevant parameters.
\end{abstract}

\section{Introduction}

\subsection{Context: Sensitivity analysis and Sobol indices}

Some mathematical models encountered in applied sciences involve a large number of poorly-known parameters as inputs. It is important for the practitioner to assess the impact of this uncertainty on the model output. An aspect of this assessment is sensitivity analysis, which aims to identify the most sensitive parameters, that is, parameters having the largest influence on the output. The parameters identified as influent have to be carefully tuned (or estimated) by the users of the model. On the other hand, parameters whose uncertainty has a small impact can be set to a nominal value (which can be some special value, for which the model is simpler).

In global (stochastic) variance-based sensitivity analysis (see for example \cite{saltelli2008global} and references therein), the input variables are assumed to be independent random variables. Their probability distributions account for the practitioner's belief about the input uncertainty. This turns the model output into a random variable, whose total variance can be split down into different partial variances (this is the so-called Hoeffding decomposition, also known as functional ANOVA, see \cite{liu2003estimating}). Each of these partial variances measures the uncertainty on the output induced by each input variable uncertainty. By considering the ratio of each partial variance to the total variance, we obtain a measure of importance for each input variable that is called the Sobol index or sensitivity index of the variable \cite{sobol1993,sobol2001global}; the most sensitive parameters can then be identified and ranked as the parameters with the largest Sobol indices. Each partial variance can be written as the variance of the conditional expectation of the output with respect to each input variable. 

Once the Sobol indices have been defined, the question of their effective computation or estimation remains open. In practice, one has to estimate (in a statistical sense) those indices using a finite sample (of size typically in the order of hundreds of thousands) of evaluations of model outputs \cite{tissot2012bias}. Indeed, many Monte Carlo or quasi Monte Carlo approaches have been developed by the experimental sciences and engineering communities. This includes the Sobol pick-freeze (SPF) scheme (see \cite{sobol2001global,gamboa2013statistical}). In SPF a Sobol index is viewed as the regression coefficient between the output of the model and its pick-freezed replication. This replication is obtained by holding the value of the variable of interest (frozen variable) and by sampling the other variables (picked variables). The sampled replications are then combined to produce an estimator of the Sobol index.

\subsection{High-dimensional, sparse contexts}
The pick-freeze scheme is used on models with a reasonable (typically, less than one thousand) number of inputs. When there is a large number of input parameters (what we call an \emph{high-dimensional} context), this scheme will require a number of model evaluations which is generally too large to be computed in practice. Hence, in high-dimensional contexts, some specific sensitivity analysis methods exist, such as \emph{screening} methods (for instance,  Morris' scheme \cite{morris1991factorial}), but they do not target the estimation of Sobol indices. Note that in \cite{tissot2012estimating}, an interesting method for estimating Sobol indices is proposed and is claimed to be applicable in high-dimensional contexts.

Besides, models with a large number of input parameters often display a so-called \emph{sparsity of effects} property, that is, only a small number of input parameters are actually influent: in other terms, we want to efficiently estimate a sparse vector of Sobol indices. Sparse estimation in high-dimensional contexts is the object of high-dimensional statistics methods, such as the LASSO estimator. 

In our frame, we would like to find the most influent inputs of a function that is to be described. This framework is closely related to exact support recovery in high-dimensional statistics. Note exact support recovery using $\ell_{1}$-minimization has been intensively investigated during the last decade, see for instance \cite{zhao2006model,fuchs2004sparse,tropp2006just,lounici2008sup} and references therein. We capitalize on these works to build our estimation procedure. The goal of this paper is to draw a bridge, which, to the best of our knowledge, has not been previously drawn, between Sobol index estimation via pick-freeze estimators and sparse linear regression models. This bridge can be leveraged so as to propose an efficient estimation procedure for Sobol indices in high-dimensional sparse models.

\subsection{Organization of the paper}
The contribution of this paper is twofold: Section \ref{sec:ConvexRelax} describes a new algorithm to simultaneously estimate the Sobol indices using $\ell_{1}$-relaxation and give elementary error analysis of this algorithm (Theorem \ref{t:linfini} and Theorem \ref{t:linfinirade}), and Section \ref{sec:RademacherBreaking} presents a new result on exact support recovery by Thresholded-Lasso that do not rely on coherence propriety. In particular, we prove that exact support recovery holds beyond the Welch bound. Appendix \ref{App:ProofA} gives the proofs of the results in Section \ref{sec:ConvexRelax}. Appendix \ref{App:WithoutWelch} gives preliminary results for proving Theorem \ref{thm:tiebreak} of Section \ref{sec:RademacherBreaking}. The remaining appendices apply these results to different designs (leading for Appendix \ref{App:RadeClassics} to Theorem \ref{thm:tiebreak}); Appendix \ref{App:AppendixGraph} and \ref{App:BerDesigns} are rather independent and study Thresholded-Lasso in the frame of random sparse graphs.

\section{A convex relaxation of Sobol's method}
\label{sec:ConvexRelax}
\subsection{Notation and model}
Denote by $X_1, \ldots, X_p$ the input parameters, assumed to be independent random variables of known distribution. Let $Y$ be the model output of interest:
\[ Y = f(X_1, \ldots, X_p), \]
where $f: \R^p \rightarrow \R$ is so that $Y \in L^2$ and $\Var(Y) \neq 0$. Assume that $f$ is \emph{additive}, i.e.
\begin{equation}\label{e:modadditif} f(X_1, \ldots, X_p) = f_1 (X_1) + \ldots + f_p(X_p) \end{equation}
for some functions $f_i: \R \rightarrow \R$, $i=1, \ldots, p$. We want to estimate the following vector:
\[ S = \left( S_i \right)_{i=1}^{p}\quad\mathrm{where}\quad S_i = \frac{\Var[ \E(Y|X_i)]}{\Var(Y)}, \]
is the $i^\text{th}$ Sobol index of $Y$ and quantifies the influence of $X_i$ on $Y$. In this article we present a new procedure for evaluating the Sobol indices when $p$ is large. We make the assumption that the number of nonzero Sobol indices:
\[ s = \# \{ i=1,\ldots,p\ \text{ s.t. }\ S_i \neq 0 \} \]
remains small in comparison to $p$. Observe our model assumes that we know an upper bound on $s$. The Sobol indices can be estimated using the so-called \emph{pick-freeze} scheme, also know as \emph{Sobol's method} \cite{sobol1993,sobol2001global}. Let $X'$ be an independent copy of $X$ and note, for $i=1,\ldots,p$:
\begin{equation}\label{e:defyi} Y^i = f(X_1', \ldots, X_{i-1}', X_i, X_{i+1}', \ldots, X_p'). \end{equation}
Then we have:
\[ S_i = \frac{\Cov(Y, Y^i)}{\Var (Y)}. \]
This identity leads to an empirical estimator of $S_i$:
\[ \widehat S_i = \frac{ \frac 1N \sum Y_k Y_k^i - \left( \frac{1}{N} \sum \frac{Y_k + Y_k^i}{2} \right)^2 }{ \frac 1N \sum \frac{(Y_k)^2 + (Y_k^i)^2}{2} - \left( \frac{1}{N} \sum \frac{Y_k + Y_k^i}{2} \right)^2 }, \]
where all sums are for $k$ from $1$ to $N$, and  $\{ (Y_k, Y_k^i) \}_{k=1,\ldots,N}$ is an iid sample of the distribution of $(Y, Y^i)$ of size $N$. This estimator has been introduced in \cite{Monod2006} and later studied in \cite{janon2012asymptotic} and \cite{gamboa2013statistical}.

In the high-dimensional frame, the estimation of the $p$ indices using $\widehat S_i$ for $i=1,\ldots,p$ would require $(p+1)N$ evaluations of $f$ so as to generate the realizations of $(Y, Y^1, \ldots, Y^p)$. This may be too much expensive when $p$ is large and/or evaluation of $f$ is costly. Besides, thanks to our sparsity assumption, such an estimation ``one variable at a time'' will be inefficient, as many computations will be required to estimate zero many times. To the best of our knowledge, this paper is the first to overcome this difficulty introducing a new estimation scheme.



\subsection{Multiple pick-freeze}
We now generalize definition \eqref{e:defyi}. Let $F \subset \{ 1, \ldots, p \}$ be a set of indices. Define $Y^F$ by:
\[ Y^F = f(X^F) \quad \text{ where } \quad \big(X^F\big)_i = \left\{ \begin{array}{l} X_i \text{ if } i \in F\,, \\ X_i' \text{ if } i \in F^{c}\,. \end{array} \right. \]
where $F^{c}=\{1,\ldots,p\}\setminus F$. The name of the method stems from the fact that, to generate the $Y^F$ variable, all the input parameters whose indices are in $F$ are \textbf{\underline F}rozen. In the pick-freeze scheme of the previous subsection, only one variable was frozen at the time, namely $F=\{i\}$. We then define:
\[ S_F = \frac{\Cov(Y, Y^F)}{\Var(Y)}, \]
which admits a natural estimator:
\begin{equation}\label{e:defestsf} \widehat S_F = \frac{ \frac 1N \sum Y_k Y_k^F - \left( \frac{1}{N} \sum \frac{Y_k + Y_k^F}{2} \right)^2 }{ \frac 1N \sum \frac{(Y_k)^2 + (Y_k^F)^2}{2} - \left( \frac{1}{N} \sum \frac{Y_k + Y_k^F}{2} \right)^2 }. \end{equation}
Under additivity hypothesis \eqref{e:modadditif}, one has:
\[ S_F = \sum_{i \in F} S_i. \]
Now, let's choose $n \in \N^\star$, subsets $F_1, \ldots, F_n$ of $\{1, \ldots, p\}$, and denote by $E$ the following vector of estimators:
\begin{equation}\label{e:defE} E = (\widehat S_{F_1}, \ldots ,\widehat S_{F_n} )\,. \end{equation}
Notice that, once the $F_1, \ldots, F_n$ have been chosen, the $E$ vector can be computed using $(n+1) N$ evaluations of $f$.

\subsubsection{Bernoulli Regression model}
The choice of $F_1, \ldots, F_n$ can be encoded in a binary matrix $\Phi$ with $n$ lines and $p$ columns, so that:
\begin{equation}
\label{e:codagephiBer}
\Phi_{ji} = \left\{ \begin{array}{l} 1 \text{ if } i \in F_j\,, \\ 0 \text{ otherwise. } \end{array} \right. \quad j=1,\ldots,n\ \mathrm{and} \ i=1,\ldots,p.  
\end{equation}
It is clear that $(S_{F_1} ,\ldots , S_{F_n})= \Phi S$, hence:
\begin{equation}\label{e:modlin} E = \Phi S + \epsilon, \end{equation}
where the $\epsilon$ vector defined by $\epsilon_j = \widehat S_{F_i} - S_{F_i}$ gives the estimation error of $\Phi S$ by $E$. In practice, the $E$ vector and the $\Phi$ matrix are known, and one has to estimate $S$. Thereby, Eq. \eqref{e:modlin} can be seen as a linear regression model whose coefficients are the Sobol indices to estimate. Moreover, observe that $n \ll p$ and $S$ sparse, hence we are in a high-dimensional sparse linear regression context. The problem \eqref{e:modlin} has been extensively studied in the context of sparse estimation \cite{lounici2008sup,bickel2009simultaneous,de2012remark} and compressed sensing \cite{candes2006stable,candes2009near}, and a classical solution is to use the LASSO estimator \cite{tibshirani1996regression}:
\begin{equation}\label{e:defSest} \widehat S \in \argmin_{U \in \R^p} \,
\left( \frac{1}{n} \norm{ E - \Phi U }_2^2 + 2r \norm{U}_1 \right)
, \end{equation}
where $r>0$ is a regularization parameter and:
\[ \norm{v}_2^2 = \sum_{j=1}^n v_j^2\,, \quad \norm{u}_1 = \sum_{i=1}^p |u_i|\,. \]
\noindent
Many efficient algorithms, such as LARS \cite{efron2004least}, are available in order to solve the above minimization problem, and to find an appropriate value for $r$. In high dimensional statistics, one key point for the LASSO procedure is the choice of the $\Phi$ matrix. In the Compressed Sensing literature, a random matrix with i.i.d. coefficients often proves to be a good choice, hence we will study possible random choices for $\Phi$.
\bigskip

\hrule

\subsubsection*{Summary of the method ``Randomized Pick-Freeze'' (RPF) for Bernoulli matrices}
\label{summaryber}
Our estimation method can be summarized as follows:
\begin{enumerate}
\item Choose $N$ (Monte-Carlo sample size), $n$ (number of estimations) and $r$ (regularization parameter).
\item Randomly sample a 0-1 matrix $\Phi$ with Bernoulli distribution of parameter $\mu$.
\item Deduce from $\Phi$ the $F_1, \ldots, F_n$ subsets using \eqref{e:codagephiBer}.
\item Generate a $N$-sized sample of $(Y, Y^{F_1}, \ldots, Y^{F_n})$.
\item Use this sample in \eqref{e:defestsf}, for $F=F_1, \ldots, F_n$, to obtain the $E$ vector \eqref{e:defE}.
\item Solve problem \eqref{e:defSest} to obtain an $\widehat S$ which estimates $S$.
\end{enumerate}

\hrule
\bigskip

\noindent
Given the binary constraint on $\Phi$, we will choose a Bernoulli distribution with parameter $\mu \in ]0;1[$. In this model, $(\Phi_{ji})_{j,i}$ are independent, with for all  $i,j$:
\begin{equation}
\label{e:loibern}
\mathbb P(\Phi_{ji} = 1)=\mu=1-\mathbb P(\Phi_{ji}=0). 
\end{equation}
\begin{thm}[$\ell^\infty$ error bound] 
\label{t:linfini}
Suppose that:
\begin{enumerate}
\item $\delta$ is a real in $\left]0 ; \frac{1-\mu}{16s}\right[$;
\item $\epsilon$ is a centered Gaussian vector whose covariance matrix has $\sigma^2$ as largest eigenvalue;
\item $r=A\sigma\sqrt{\mu(1+\delta)}\sqrt{\frac{\ln p}{n}}$ for some $A>2\sqrt 2$. 
\end{enumerate}
Let:
\begin{eqnarray*}
t &=& \left( \frac 3 2 + \frac{ 24 (\mu+\delta) }{ \frac{1-\mu}{s} - 16\delta } \right) \frac{r}{\mu}\,; \\
\alpha&=&1 - \left( 1- p^{1-A^2/8} \right) \left( 1 -   2 \exp \left( - 2 n \delta^2 \mu^2 + \ln p \right) \right) \\&&+ \exp \left( - 2 n \delta^2 \mu^2 + 2 \ln p \right). 
\end{eqnarray*}
Then, with probability at least $1-\alpha$, any solution $\widehat S$ of \eqref{e:defSest} satisfies:
\[
 \max_{i=1,\ldots,p} | \widehat S_i - S_i | \leq t.
\]
\end{thm}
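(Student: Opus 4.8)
The plan is to run a by-now-standard LASSO error analysis, adapted to the fact that the design $\Phi$ is a \emph{non-centered} $0/1$ matrix, so that its Gram matrix $G := \frac1n\Phi^\top\Phi$ is close not to a multiple of the identity but to $\mu(1-\mu)I + \mu^2 J$, where $J$ is the all-ones matrix. First I would record the first-order optimality (KKT) condition for \eqref{e:defSest}: writing $h := \widehat S - S$ and using $E = \Phi S + \epsilon$ from \eqref{e:modlin}, any minimizer satisfies $G h = \frac1n\Phi^\top\epsilon - r g$ for some $g$ in the $\ell_1$-subdifferential at $\widehat S$, so $\norm{g}_\infty \le 1$. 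Everything then reduces to controlling, on a good event, two random objects: the noise term $w := \frac1n\Phi^\top\epsilon$ and the fluctuations of $G$ around its mean.

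For the noise, conditionally on $\Phi$ each coordinate $w_i$ is centered Gaussian with variance at most $\sigma^2\,(\text{column sum})/n^2$; on the event that every column sum is at most $n\mu(1+\delta)$ this is $\le \sigma^2\mu(1+\delta)/n$, and the Gaussian tail together with a union bound over the $p$ coordinates yields $\norm{w}_\infty \le r/2$ with probability at least $1-p^{1-A^2/8}$, which is exactly where the prescribed value of $r$ and the constant $A>2\sqrt2$ enter. For the Gram matrix I would apply Hoeffding's inequality to each entry: the diagonal entries are column averages with mean $\mu$, the off-diagonal entries are averages of the products $\Phi_{ji}\Phi_{jk}$ with mean $\mu^2$, and each deviation of size $\delta\mu$ has probability $\le 2\exp(-2n\delta^2\mu^2)$. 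Union bounds over the $p$ diagonal and $\le p^2$ off-diagonal entries produce precisely the two exponential terms appearing in $\alpha$, and on the resulting event one has $|G_{ii}-\mu|\le\delta\mu$ and $|G_{ik}-\mu^2|\le\delta\mu$ for $i\ne k$.

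Working on the intersection of these events, I would extract the bound in two stages. First, the usual LASSO basic inequality gives the cone condition $\norm{h_{T^c}}_1 \le 3\norm{h_T}_1$ (with $T$ the support of $S$, $|T|=s$), hence $\norm{h}_1 \le 4\norm{h_T}_1 \le 4\sqrt s\,\norm{h}_2$. The crucial point is the lower bound on $h^\top G h = \frac1n\norm{\Phi h}_2^2$: splitting $G$ into $\mu(1-\mu)I + \mu^2 J$ plus the fluctuation and using that $\mu^2 J \succeq 0$, the rank-one part only helps and can be discarded, so $h^\top G h \ge \mu(1-\mu)\norm{h}_2^2 - \delta\mu\norm{h}_1^2 \ge \mu\big[(1-\mu)-16s\delta\big]\norm{h}_2^2$ after inserting $\norm{h}_1^2\le16s\norm{h}_2^2$. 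This restricted-eigenvalue inequality has positive constant exactly under the hypothesis $\delta<\frac{1-\mu}{16s}$, and combined with the basic inequality it bounds $\norm{h}_1$ by a quantity proportional to $r/\big(\mu(\tfrac{1-\mu}{s}-16\delta)\big)$.

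Finally, for the $\ell_\infty$ bound I would read off the $i$-th line of $Gh = w - rg$ and isolate $\mu h_i$, moving the diagonal fluctuation $(G_{ii}-\mu)h_i$ into the off-diagonal sum; since $|G_{ii}-\mu|\le\delta\mu\le\mu(\mu+\delta)$ and $|G_{ik}|\le\mu(\mu+\delta)$ for $k\ne i$, this gives $\mu|h_i| \le \tfrac32 r + \mu(\mu+\delta)\norm{h}_1$ for every $i$, hence $\norm{h}_\infty \le \tfrac{3r}{2\mu} + (\mu+\delta)\norm{h}_1$. Substituting the $\ell_1$ bound from the previous step reproduces the stated $t$ up to the explicit numerical constants, and the probability $1-\alpha$ follows by combining the noise event with the two Gram-matrix events through a union/inclusion--exclusion bound. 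The main obstacle I anticipate is precisely the restricted-eigenvalue step: because $\Phi$ has only non-negative entries its columns are strongly positively correlated (coherence of order $\mu$, far above any Welch-type threshold), so a direct incoherence or matrix-inversion argument fails; the workable idea is that the offending all-ones direction enters $G$ through the positive semidefinite term $\mu^2 J$, which may be dropped in the lower bound, leaving an effective identifiability driven by $\mu(1-\mu)I$ and kept positive on the cone by the sparsity-scaled condition $\delta<\tfrac{1-\mu}{16s}$.
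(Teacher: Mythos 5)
Your proposal is correct and takes essentially the same route as the paper's own proof: Hoeffding concentration of the Gram entries (giving exactly the two exponential terms in $\alpha$), a Gaussian tail plus union bound for the noise term yielding the $p^{1-A^2/8}$ term, the KKT/Dantzig bound $\norm{Gh}_\infty\leq \tfrac{3r}{2}$, a restricted-eigenvalue lower bound obtained by discarding the positive semidefinite all-ones component of $\E G$ (the paper phrases this as the circulant matrix $M$ having smallest eigenvalue $1-\mu$), and the same final entrywise extraction of the $\ell^\infty$ bound from the $\ell^1$ bound. The only difference is cosmetic: the paper first rescales by $1/\sqrt{\mu}$ and works with $\Psi=\frac{1}{n\mu}\Phi^\top\Phi$, whereas you work directly with $G=\frac1n\Phi^\top\Phi$; the constants, events and probabilities coincide.
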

\begin{proof}
A proof can be found in Appendix \ref{proof:TheoBer}.
\end{proof}

\noindent
\textbf{Remark 1: } For the probability above to be greater than zero, it is necessary to have:
\begin{equation}
\label{e:FactureBernoulli1}
n \geq \frac{\ln p}{\delta^2 \mu^2} \geq \frac{ 256 s^2 \ln p }{\mu^2 (1-\mu)^2}\,. 
\end{equation}
\noindent
\textbf{Remark 2: } The statement of Theorem \ref{t:linfini} can be compared to standard results in high-dimensional statistics such as exact support recovery under coherence property \cite{lounici2008sup}. Nevertheless, observe that a standard assumption is that the column norm of the design matrix is $\sqrt n$ while in our frame this norm is random with expectation of order $\sqrt{\mu n}$.

\noindent
\textbf{Remark 3: } In our context, the second hypothesis of the above theorem does not exactly hold; indeed, $\epsilon$ is only asymptotically Gaussian (when $N \rightarrow +\infty$), see \cite{janon2012asymptotic} for instance. However, for practical purposes, the observed convergence is fast enough. One can also see a related remark in our proof of this theorem, in Appendix \ref{proof:TheoBer}. 

\begin{cor}[Support recovery by Thresholded-Lasso]
\label{c:supp}
\label{c:supprade}
Let: 
\[ S_{min} = \min_{\substack{i=1,\ldots,p \\ \text{s.t. }S_i \neq 0}} S_i. \]
\noindent
Then, under the same assumptions of Theorem \ref{t:linfini}, we have, with probability greater than $1-\alpha$ and for all $i=1,\ldots,p$:
\[ \widehat S_i > t \;\Longrightarrow\; S_i > 0, \]
and:
\[ \widehat S_i < S_{min} - t \;\Longrightarrow\; S_i = 0. \]
\end{cor}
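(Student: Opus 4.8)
The plan is to derive this corollary as an essentially deterministic consequence of the $\ell^\infty$ bound already established in Theorem \ref{t:linfini}. The entire probabilistic content is inherited for free: since we work under the same hypotheses, Theorem \ref{t:linfini} guarantees that the event
\[
\mathcal A = \left\{ \max_{i=1,\ldots,p} |\widehat S_i - S_i| \leq t \right\}
\]
occurs with probability at least $1-\alpha$. I would fix an outcome in $\mathcal A$ once and for all and prove both implications pointwise on this event; the stated probability $1-\alpha$ then transfers verbatim to the conclusion. The only structural fact I need beyond $\mathcal A$ is that Sobol indices are nonnegative, which is immediate from their definition $S_i = \Var[\E(Y|X_i)]/\Var(Y) \geq 0$ as a ratio of (nonnegative) variances.

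For the first implication, I would start from the hypothesis $\widehat S_i > t$ and use the lower half of the bound on $\mathcal A$, namely $S_i \geq \widehat S_i - t$. Combining these gives $S_i \geq \widehat S_i - t > t - t = 0$, hence $S_i > 0$, which is the desired conclusion. No additional input is required here.

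For the second implication, I would start from $\widehat S_i < S_{min} - t$ and use the upper half of the bound on $\mathcal A$, namely $S_i \leq \widehat S_i + t$. This yields $S_i \leq \widehat S_i + t < (S_{min} - t) + t = S_{min}$, so $S_i < S_{min}$. Now I invoke nonnegativity together with the definition of $S_{min}$ as the minimum of the \emph{nonzero} Sobol indices: if $S_i$ were nonzero it would (being positive) satisfy $S_i \geq S_{min}$, contradicting $S_i < S_{min}$. Therefore $S_i = 0$, as claimed.

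I do not expect any genuine obstacle: the argument is a two-line application of the triangle-type inequalities contained in Theorem \ref{t:linfini}, and the only non-formal ingredient is the sign constraint $S_i \geq 0$, which must be explicitly stated but needs no work. The one point worth flagging for rigor is that both implications hold on the \emph{same} event $\mathcal A$, so that the union bound is not needed a second time and the probability $1-\alpha$ is preserved exactly.
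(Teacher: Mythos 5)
Your proof is correct and follows essentially the same route as the paper's: both implications are deterministic consequences of the sup-norm bound of Theorem \ref{t:linfini}, valid on the probability-$(1-\alpha)$ event, combined with the definition of $S_{min}$. If anything, your one-sided inequalities $S_i \geq \widehat S_i - t$ and $S_i \leq \widehat S_i + t$ together with the explicit appeal to $S_i \geq 0$ are slightly more careful than the paper's absolute-value manipulations (which tacitly use $|\widehat S_i| < S_{min}-t$ and nonnegativity of the Sobol indices), but the substance is identical.
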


\begin{proof}[Proof of Corollary \ref{c:supp}]
For the first point, notice that:
\[ |S_i| \geq |\widehat S_i| - |\widehat S_i - S_i| \geq |\widehat S_i| - t > 0 \text{ if } \widehat S_i>t. \]
For the second point: if $\widehat S_i < S_{min}-t$, we have:
\[ |S_i| \leq |S_i - \widehat S_i| + |\widehat S_i| < t + (S_{min}-t) = S_{min}, \]
and $S_i=0$ by definition of $S_{min}$.
\end{proof}

\noindent
\textbf{Remark 4 (important): } Theorem \ref{t:linfini} and Corollary \ref{c:supp} show that one can identify the most important inputs of a function as soon as the corresponding Sobol indices are above the threshold $t$. Recall \textit{Thresholed-Lasso} is a thresholded version of any solution to \eqref{e:defSest}. Moreover, observe that we do not address the issue of estimating the Sobol indices. This can be done using a two-step procedure: estimate the support using Thresholed-Lasso and then estimate the Sobol indices using a standard least squares estimator.

\subsubsection{Rademacher Regression model}
The choice of $F_1, \ldots, F_n$ can also be encoded in a $\pm1$ matrix $\Phi$ with $n$ lines and $p$ columns, so that:
\begin{equation}
\label{e:codagephi}
\Phi_{ji} = \left\{ \begin{array}{ll} 1 & \text{ if } i \in F_j\,, \\ -1 &\text{ otherwise. } \end{array} \right. \quad j=1,\ldots,n\ \mathrm{and} \ i=1,\ldots,p.  
\end{equation}
\noindent
It is clear that:
\[ (S_{F_1} ^{\Delta},\ldots , S_{F_n}^{\Delta})= \Phi S\,, \]
where $S_{F_i} ^{\Delta}=S_{F_{i}}-S_{F_{i}^{c}}$. Hence:
\begin{equation}\label{e:modlinRade} E = \Phi S + \epsilon^{\Delta}, \end{equation}
where the $\epsilon$ vector defined by $\epsilon_j^{\Delta} = \widehat S_{F_i}^{\Delta} - S_{F_i}^{\Delta}$ gives the estimation error of $\Phi S$ by $E$. Thus, the problem of estimating $S$ from $E$ has been casted into linear regression which can be tackled by \eqref{e:defSest}.

\bigskip

\hrule

\subsubsection*{Summary of the method ``Randomized Pick-Freeze'' (RPF) for Rademacher matrices}
\label{summaryrad}
Our estimation method can be summarized as follows:
\begin{enumerate}
\item Choose $N$ (Monte-Carlo sample size), $n$ (number of estimations), and $r$ (regularization parameter).
\item Sample a $\Phi$ matrix according to a $\pm1$ symmetric Rademacher distribution.
\item Deduce from $\Phi$ the $F_1, \ldots, F_n$ subsets using the correspondance \eqref{e:codagephi}.
\item Generate a $N$-sized sample of $(Y, Y^{F_1}, \ldots, Y^{F_n})$.
\item Use this sample in \eqref{e:defestsf}, for $F=F_1, \ldots, F_n$, to obtain the $E$ vector \eqref{e:defE}.
\item Solve problem \eqref{e:defSest} to obtain an $\widehat S$ which estimates $S$.
\end{enumerate}
\hrule

\bigskip

\pagebreak[2]
\noindent
We now consider a different sampling procedure for $\Phi$, which will make it possible to improve on the constants in \eqref{e:FactureBernoulli1} as it will be stated in \eqref{e:FactureRademacher}. Specifically, we sample $\Phi$ using a symmetric Rademacher distribution:
\begin{equation}
\label{e:loirade}
(\Phi_{ji})_{j,i} \text{ are independent}: \; P(\Phi_{ji} = 1)=P(\Phi_{ji}=-1)=1/2. 
\end{equation}
\noindent
The following theorem is the equivalent of Theorem \ref{t:linfini} for Rademacher designs.
\begin{thm}[$\ell^\infty$ error bound] 
\label{t:linfinirade}
Suppose that:
\begin{enumerate}
\item $\epsilon$ is a centered Gaussian vector whose covariance matrix has $\sigma^2$ as largest eigenvalue;
\item $\delta=\frac{1}{7 \delta' s}$ for some real $\delta'>1$;
\item $r=A\sigma\sqrt{\frac{\ln p}{n}}$ for some $A>2\sqrt 2$. 
\end{enumerate}
Let:
\begin{eqnarray*}
t &=& \frac 3 2 \left( 1 + \frac{ 16 }{ 5 (\delta'-1) } \right) r \\
\alpha&=&  1- \left(1-p^{1-A^2/8}\right)\left(1-\exp \left( - n \frac{49 \delta^2 s^2}{2}  + 2 \ln p \right)\right).
\end{eqnarray*}
Then, with probability at least $1-\alpha$, any solution $\widehat S$ of \eqref{e:defSest} satisfies:
\[
 \max_{i=1,\ldots,p} | \widehat S_i - S_i | \leq t.
\]
\end{thm}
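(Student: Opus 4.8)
The plan is to recognize that Theorem \ref{t:linfinirade} and Theorem \ref{t:linfini} share a common deterministic core: an $\ell^\infty$ bound for the LASSO \eqref{e:defSest} that takes as input (i) a bound on the rescaled noise $\frac1n\norm{\Phi^\top\epsilon}_\infty$ and (ii) an incoherence bound on the Gram matrix $\Psi=\frac1n\Phi^\top\Phi$, and returns $\max_i|\widehat S_i-S_i|\le t$. I would first isolate this deterministic lemma (the same one underlying Appendix \ref{proof:TheoBer}), then verify that, for the Rademacher design \eqref{e:loirade}, hypotheses (i) and (ii) hold on explicit high-probability events whose probabilities multiply to $1-\alpha$.

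For the noise event, I would condition on $\Phi$ and use the key Rademacher simplification: every column satisfies $\norm{\Phi_i}_2^2=n$ exactly (there is no random column norm to control, contrary to the Bernoulli case, cf. Remark 2). Hence each coordinate $\frac1n(\Phi^\top\epsilon)_i$ is a centered Gaussian of variance at most $\sigma^2/n$, and a standard Gaussian tail bound together with a union bound over the $p$ columns shows that, with $r=A\sigma\sqrt{\ln p/n}$ and $A>2\sqrt2$, the event $\{\frac1n\norm{\Phi^\top\epsilon}_\infty\le r/2\}$ fails with probability at most $p^{1-A^2/8}$. This is exactly the factor $(1-p^{1-A^2/8})$ appearing in $1-\alpha$.

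For the incoherence event, I would note that for $i\neq j$ the entry $\Psi_{ij}=\frac1n\sum_k\Phi_{ki}\Phi_{kj}$ is an average of $n$ i.i.d. \emph{centered} $\pm1$ variables (a product of two independent Rademacher signs is again Rademacher), while the diagonal of $\Psi$ is deterministically $1$. Hoeffding's inequality then gives $\mathbb P(|\Psi_{ij}|>7\delta s)\le 2\exp(-n\,49\delta^2 s^2/2)$, and a union bound over the at most $p^2$ pairs yields the term $\exp(-n\,49\delta^2 s^2/2+2\ln p)$. Since $\delta=\frac1{7\delta's}$, on this event the coherence is at most $1/\delta'$. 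Because $\Phi$ and $\epsilon$ are independent, conditioning on the incoherence event and applying the noise estimate inside it makes the two failure probabilities multiply, producing precisely $1-\alpha=(1-p^{1-A^2/8})(1-\exp(-n\,49\delta^2 s^2/2+2\ln p))$.

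It remains to run the deterministic core on the intersection of the two events, and this is where I expect the real difficulty. Writing $h=\widehat S-S$, the subgradient optimality of \eqref{e:defSest} gives $\Psi h=\frac1n\Phi^\top\epsilon-r\widehat g$ with $\norm{\widehat g}_\infty\le1$, whence coordinatewise $|h_i|\le \frac32 r+\big|\sum_{j\neq i}\Psi_{ij}h_j\big|$. The delicate point is to control the interference $\sum_{j\neq i}\Psi_{ij}h_j$ using only the weak incoherence $1/\delta'$ with $\delta'>1$ fixed --- this is much weaker than the textbook coherence requirement of order $1/s$, and is what lets the method work beyond the Welch bound. I would combine the incoherence bound with the cone inequality $\norm{h_{F^c}}_1\le 3\norm{h_F}_1$ (obtained by comparing the objective at $\widehat S$ and at $S$ on the noise event) to turn the coordinatewise estimate into a single self-referential inequality of the form $\norm h_\infty\le\frac32 r+\rho\,\norm h_\infty$ with an effective factor $\rho$ that stays below $1$ exactly when $\delta'>1$. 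Solving it produces $t=\frac32\big(1+\frac{16}{5(\delta'-1)}\big)r$, the blow-up at $\delta'\to1^+$ reflecting the boundary of the incoherence regime. Getting the effective contraction factor to be $<1$ precisely at $\delta'>1$ --- rather than at the stronger $\delta'\gtrsim s$ one would naively obtain --- is the crux, and I would expect it to rely on a careful, possibly weighted, bookkeeping of the support contributions together with the constraint that the Sobol indices have bounded $\ell_1$ norm.
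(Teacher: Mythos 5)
Your overall architecture --- a deterministic Lounici-type core fed by (i) a Gaussian-tail noise event giving the factor $1-p^{1-A^2/8}$ and (ii) a Hoeffding-plus-union-bound coherence event for the off-diagonal entries of $\Psi=\frac1n\Phi^T\Phi$, with $\Psi_{ii}=1$ holding deterministically --- is exactly the paper's route: Appendix \ref{proof:TheoRade} verifies these two facts and then simply invokes Theorem 1 of \cite{lounici2008sup}, which is the same core as the Bernoulli proof of Appendix \ref{proof:TheoBer} run with $\mu=0$ and $M=\mathrm{Id}$. Your noise step is correct as stated (and, as a minor point, independence of $\Phi$ and $\epsilon$ is not needed: one conditions on the design and uses only the uniform bound $\sigma^2$ on the covariance, as in the Bernoulli proof).

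The genuine gap is in the coherence level you work with and in the deterministic step you defer to the end. You calibrated the coherence event to the displayed probability, i.e.\ to the threshold $7\delta s=1/\delta'$, and then hope that ``careful, possibly weighted, bookkeeping'' will produce $\|h\|_\infty\le\frac32 r+\rho\,\|h\|_\infty$ with $\rho<1$ exactly when $\delta'>1$. That step does not exist: from $\max_{i\neq j}|\Psi_{ij}|\le 1/\delta'$ alone, the interference term obeys no better than $\bigl|\sum_{j\neq i}\Psi_{ij}h_j\bigr|\le\frac{1}{\delta'}\|h\|_1\le\frac{4s}{\delta'}\|h\|_\infty$ (cone condition plus $\|h_J\|_1\le s\|h\|_\infty$), and the factor $s$ is unavoidable, since constant pairwise coherence is compatible with $s$ nearly collinear active columns, for which an $s$-free rate $t\asymp r$ is simply false. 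This is precisely the square-root bottleneck discussed in Section \ref{sec:RademacherBreaking}: it is broken there only by replacing coherence with RIP/UDP, and even then the resulting bound \eqref{eq:RademacherLinftyConsitency} grows like $\sqrt s$, unlike the $s$-free $t$ of Theorem \ref{t:linfinirade}. What the paper actually does is take the coherence event at the threshold $\delta=\frac{1}{7\delta' s}$ --- coherence of order $1/s$, i.e.\ Assumption 2 of \cite{lounici2008sup} with $\alpha=\delta'>1$ --- after which Lounici's Theorem 1 yields $t=\frac32\bigl(1+\frac{16}{5(\delta'-1)}\bigr)r$ with no further work. The price is that Hoeffding at this threshold gives failure probability $\exp(-n\delta^2/2+2\ln p)$, not $\exp(-n\,49\delta^2s^2/2+2\ln p)$; the latter exponent, which you matched, is inconsistent with the proof and is best read as an error in the statement --- only the former is compatible with Remark 1 (which demands $n\geq Cs^2\ln p$, whereas the displayed exponent would be positive already for $n\gtrsim\delta'^2\ln p$). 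So the repair is: run your coherence event at level $\frac{1}{7\delta's}$, accept the correspondingly weaker probability, and replace your final ``crux'' with the standard Lounici argument; the bounded $\ell_1$ norm of the Sobol vector plays no role and cannot rescue the constant-coherence version.
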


\begin{proof}
A proof can be found in Appendix \ref{proof:TheoRade}.
\end{proof}

\noindent
\textbf{Remark 1: } For the probability above to be greater than zero, it is necessary to have:
\begin{equation}
\label{e:FactureRademacher}
n \geq C s^2 \ln p
\end{equation}
for some constant $C>0$.

\noindent
\textbf{Remark 2: } Support recovery property (Corollary \ref{c:supprade}) also holds in this context.

\subsection{Numerical experiments}
\label{sec:NumExp}
\subsubsection{LASSO convergence paths}
In this section, we perform a numerical test of the "Randomized Pick-Freeze" estimation procedure for Bernoulli and Rademacher matrices, summarized respectively on pages \pageref{summaryber} and \pageref{summaryrad}. We use the following model:
\[ Y = f(X_1, \ldots, X_{300}) = X_1^2 + 4 X_1 + 4 X_2 + 10 X_3, \]
hence $p=300$ and $s=3$, with $(X_i)_{i=1,\ldots,120}$ iid uniform on $[0,1]$. It is easy to see that, in this model, we have $S_3 > S_1 > S_2 > 0$ and $S_i=0$ for all $i>3$. The tests are performed by using $n=30$. The obtained LASSO regularization paths (ie., the estimated indices for different choices of the penalization parameter $r$)  are plotted in Figures \ref{f:1} (for Bernoulli design matrix with parameter $\mu=1/2$) and \ref{f:2} (for Rademacher design matrix). The Monte-Carlo sample size used are $N=3000$ and $N=2000$, respectively for Bernoulli and Rademacher designs. This difference in sample sizes accounts for the increase in the number of required evaluations of the $f$ function when a Rademacher design is used (as, in this case, each replication is a difference of two pick-freeze estimators on the same design). 

We observe that the Rademacher design seems to perform better (as LASSO convergence is faster) than the Bernoulli design, in accordance with the remarks made in the beginning of Section \ref{sec:RademacherBreaking}. Both designs perfectly recover the active variables (the support of $S$), as well as the ordering of indices. Note that the proposed method requires only $30 \times 2 \times 3000=180 000$  evaluations of the $f$ function to estimate the 300 Sobol indices, while a classic one-by-one pick-freeze estimation with the same Monte-Carlo sample size would require $3000 \times (300+1) = 903 000$ evaluations of $f$.

\begin{figure}
\begin{center}
\includegraphics[width=7cm,angle=-90]{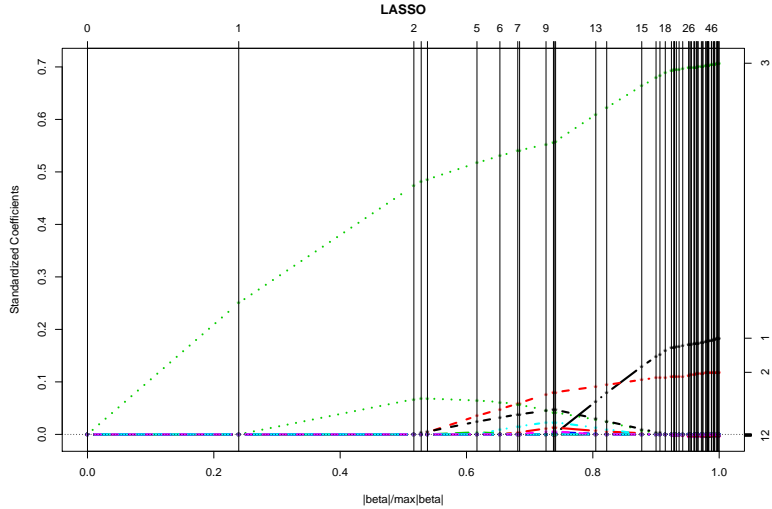}
\caption{LASSO convergence path for a Bernoulli design.}
\label{f:1} 
\end{center}
\end{figure}

\begin{figure}
\begin{center}
\includegraphics[width=7cm,angle=-90]{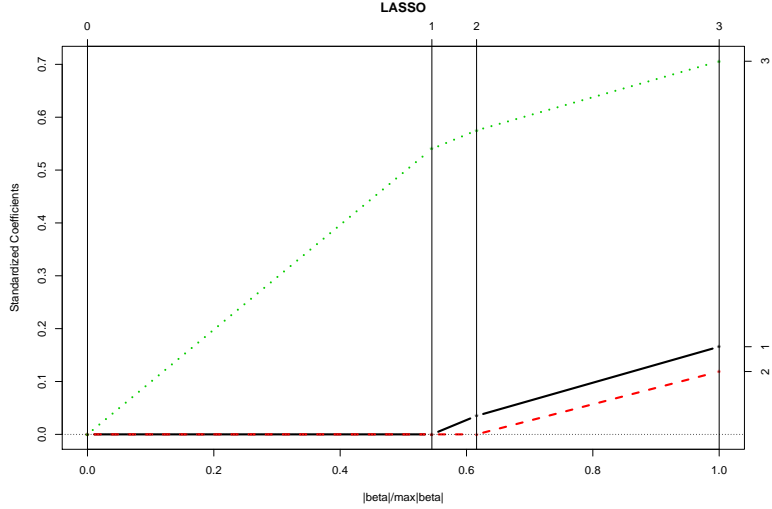}
\caption{LASSO convergence path for a Rademacher design.}
\label{f:2} 
\end{center}
\end{figure}

\subsubsection{Illustration of $\ell^\infty$ error bounds}
We now present a synthetic example which shows the performance of the Rademacher RPF algorithm, used with Theorem \ref{t:linfinirade} and the support recovery corollary.

Suppose that we work on a model with $p=30000$ inputs, with only $s=3$ of them have a nonzero Sobol index. We postulate that all the $\widehat S_i$ estimators, as well as the $\widehat S_F^\Delta$ have standard Gaussian distribution.  By using $N=10^6$ and $n=100$ in Theorem \ref{t:linfinirade}, we get that the $t$ error bound given in this theorem is $t=0.03$, with probability greater than $1-\alpha=95\%$. Hence, by doing calling $3 N n = 3 \times 10^8$ to the $f$ function, one can correctly identify parameters whose Sobol indices are greater than $0.03$.

On the other hand, when using classical one-by-one Sobol index estimation, one has to perform $p=30000$ independent estimations of Sobol index confidence intervals, at level $1 - 0.95^{1/30000}=1.71 \times 10^{-6}$ (by using Šidák correction). 
 From the quantiles of the Gaussian distribution, the length of these intervals is $9.568/\sqrt N$. Hence, getting confidence intervals of width $0.03$ require $N'=(9.568/0.03)^2 \approx 101720$ sample size. Hence, the total cost for this method is $2 N' (p+1)=6 103 200 000\approx 61 \times 10^8$ calls to the $f$ function.

\section{Breaking the square-root bottleneck}
\label{sec:RademacherBreaking}

In the beginning of this paper, we have showed results that are limited by the constraint $n\geq C s^2 \log p$ for some constant $C$. This limitation is due to the use of the mutual incoherence property in the proofs, which is heuristically bounded by Welch's bound \cite{welch1974lower}. We now present a new approach, based on Universal Distortion Property \cite{de2012remark} and a relaxed version of the coherence (see Lemma \ref{lem:UDPimpliesSupportRecovery} in Appendix \ref{sec:GeneralResults}) which enables to break this ``bottleneck'' for Rademacher designs. Note that applying this approach for Bernoulli designs leads to a new proof of the above stated results. For sake of completeness, we give these proofs in Appendix \ref{App:AppendixGraph}. This appendix covers the frame of exact support recovery using Thresholded-Lasso using adjacency matrix as design. In this section we focus on Rademacher designs defined by \eqref{e:loirade}, namely $(\Phi_{ji})_{j,i}$ are independent and for all $i,j$, $\mathbb P(\Phi_{ji} = \pm 1)=1/2$. 

\begin{thm}[Exact recovery with Rademacher designs]
\label{thm:tiebreak}
There exists universal constants $C_{1},C_{2},C_{3}>0$ such that the following holds. Let $c>1$ and $\Phi\in\{\pm1\}^{n\times p}$ a Rademacher matrix drawn according to \eqref{e:loirade} with:
\begin{itemize}
\item $n\geq  n_{0}:=C_{1}s\log(C_{2} p)$,
\item $s\geq6(2+c)/C_{1}$,
\item $\epsilon\sim\mathcal N(0,\Sigma_{n})$ and the covariance diagonal entries enjoy $\Sigma_{i,i}\leq \sigma^{2}$.
\end{itemize}
 Let $\hat S$ be any solution to \eqref{e:defSest} with regularizing parameter:
\[
r\geq r_{1}:=45\,\sigma\Big[\frac{c\log p}{n}\Big]^{1/2}\,,
\]
Then, with a probability greater than $1-3p^{-c}-2\exp(-C_{3}n)$,
\begin{equation}
\label{eq:RademacherLinftyConsitency}
\lVert\hat S-S\lVert_{\infty}\leq\sigma\sqrt{\frac{n_{0}}{n}}\Big[\frac{r}{r_{1}}\Big]\Big[C'_{1}+\frac{C'_{2}}{\sqrt s}\Big]\sqrt s\,,
\end{equation}
where $C'_{1}=35869(c({2+c}))^{1/2}/C_{1}$ and $C'_{2}=46.31c^{1/2}/C_{1}^{1/2}$.
\end{thm}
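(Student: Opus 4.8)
The plan is to split the argument into a deterministic core and two high-probability verifications, following the architecture announced before the statement. The deterministic core is Lemma \ref{lem:UDPimpliesSupportRecovery}: once one knows that the design $\Phi$ enjoys the Universal Distortion Property (with a distortion constant and a relaxed coherence parameter lying in the admissible range) and that the penalty $r$ dominates the noise, any minimizer of \eqref{e:defSest} satisfies an $\ell^\infty$ bound of the shape appearing in \eqref{eq:RademacherLinftyConsitency}. The whole theorem thus reduces to exhibiting two events, of the stated probabilities, on which (i) the Rademacher matrix drawn from \eqref{e:loirade} satisfies UDP and (ii) the regularization parameter controls the stochastic term $\frac1n\Phi^\top\epsilon$ coming from \eqref{e:modlinRade}; a final union bound then produces the $1-3p^{-c}-2\exp(-C_3 n)$ guarantee and the scaling $\sigma\sqrt{n_0/n}\,[r/r_1]$.

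First I would handle the noise, working conditionally on $\Phi$, which is independent of $\epsilon$ by construction. Each coordinate of $\frac1n\Phi^\top\epsilon$ is then a centered Gaussian whose standard deviation is of order $\sigma/\sqrt n$, using the covariance bound $\Sigma_{i,i}\le\sigma^2$ together with the fact that every column of $\Phi\in\{\pm1\}^{n\times p}$ has the explicit Euclidean norm $\sqrt n$. A Gaussian tail estimate followed by a union bound over the $p$ columns shows that, with probability at least $1-O(p^{-c})$, one has $\frac1n\norm{\Phi^\top\epsilon}_\infty$ bounded by a constant multiple of $\sigma\sqrt{c\log p/n}$; the numerical constant $45$ in $r_1$ is chosen exactly so that the hypothesis $r\ge r_1$ makes the penalty dominate the noise with room to spare. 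This is the origin of the $3p^{-c}$ term.

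Second, and this is the heart of the matter, I would prove that a Rademacher matrix satisfies UDP as soon as $n\ge n_0=C_1 s\log(C_2 p)$, with failure probability at most $2\exp(-C_3 n)$. The efficient route is to establish a restricted-isometry-type estimate for $\Phi/\sqrt n$ on (effectively) $s$-sparse vectors via a covering/net argument over the sphere restricted to sparse supports, combined with a subgaussian concentration inequality for $\norm{\Phi\gamma}_2^2/n$ around $\norm{\gamma}_2^2$. The cardinality of the net contributes the $\log(C_2 p)$ factor and forces the \emph{linear-in-$s$} sample size, which is precisely what bypasses the Welch-bound/coherence route responsible for the $s^2\log p$ bottleneck of Theorem \ref{t:linfinirade}. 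I would then convert this isometry estimate into UDP using the preliminary implication of Appendix \ref{App:WithoutWelch}, and feed the resulting distortion and coherence parameters into Lemma \ref{lem:UDPimpliesSupportRecovery} together with the condition $s\ge 6(2+c)/C_1$, which keeps those parameters admissible.

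The main obstacle will be this second step: obtaining UDP with the sharp linear-in-$s$ scaling and, above all, propagating explicit numerical constants through the net cardinality and the Rademacher concentration bound so as to arrive at the stated $C'_1$ and $C'_2$. The remainder—assembling the two events by a union bound and reading off \eqref{eq:RademacherLinftyConsitency} from the conclusion of Lemma \ref{lem:UDPimpliesSupportRecovery}—is essentially bookkeeping once the distortion constants are in hand.
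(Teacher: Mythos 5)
Your overall architecture (reduce everything to the deterministic Lemma \ref{lem:UDPimpliesSupportRecovery}, then verify its hypotheses on high-probability events) is the same as the paper's, and two of your three ingredients match: the Gaussian-tail/union-bound control of $\frac1n\norm{\Phi^\top\epsilon}_\infty$, and the UDP property obtained from a restricted-isometry estimate for Rademacher matrices (this is exactly Lemma \ref{lem:RademacherRIP} combined with Lemma \ref{lem:RademacherUDP} in the paper). However, there is a genuine gap: you reduce the theorem to \emph{two} probabilistic events, while the proof needs \emph{three}. Lemma \ref{lem:UDPimpliesSupportRecovery} requires, besides UDP and noise domination, the relaxed-coherence bound $\max_{1\leq k\neq l\leq p}\frac1n\abs{\sum_{j=1}^n\Phi_{j,k}\Phi_{j,l}}\leq\theta_1$, and its conclusion carries the term $\frac{2n\theta_1\rho^2 s}{1-(r_0/r)-2\kappa}$. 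Since the UDP constant for Rademacher designs is $\rho\approx 4.41/\sqrt n$, this term is of order $\theta_1 s$: to reach the $\sqrt s$-order bound that the paper's proof actually establishes, namely $\bigl(1.0316+799\sqrt{(2+c)/C_1}\,\sqrt s\bigr)r$, one needs $\theta_1\lesssim 1/\sqrt s$. You propose to ``feed the resulting distortion and coherence parameters'' from the RIP/UDP step into the lemma, but RIP cannot supply this: a restricted isometry constant $\delta$ only bounds pairwise column correlations by a fixed multiple of $\delta$, independently of $s$. With a constant $\theta_1$ your final bound degrades to order $s\,r$ rather than $\sqrt s\,r$, a loss of a $\sqrt s$ factor that is incompatible with the stated constants $C'_1$ and $C'_2$ (which, unwound, encode the $799\sqrt{(2+c)/C_1}\,\sqrt s$ coefficient).

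The missing ingredient is precisely Lemma \ref{lem:RademacherCorrelationPredictors}: for each pair $k\neq l$, the products $X_i=\Phi_{i,k}\Phi_{i,l}$ are i.i.d.\ Rademacher, so Bernstein's inequality plus a union bound over the at most $e^{2\log p}$ pairs gives $\theta_1\leq\bigl[\frac{8(2+c)}{3C_1}\bigr]^{1/2}s^{-1/2}$ with probability at least $1-2p^{-c}$ once $n\geq C_1 s\log p$. This third event also explains two features of the statement that your proposal misattributes. First, the hypothesis $s\geq 6(2+c)/C_1$ is there to keep the Bernstein deviation level $t=\bigl(\frac{8(2+c)}{3sC_1}\bigr)^{1/2}$ below $1$; it has nothing to do with keeping the UDP parameters admissible. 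Second, the failure probability $3p^{-c}$ aggregates $2p^{-c}$ from this coherence event with the $p^{-c}$-type term from the noise, so the noise is not ``the origin of the $3p^{-c}$ term'' as you claim. Once this pairwise-correlation lemma is added as a third event and its $\theta_1$ is fed into Lemma \ref{lem:UDPimpliesSupportRecovery} (with $\theta_2=1$, which is exact for $\pm1$ matrices), the rest of your plan goes through as in the paper.
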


\begin{proof}
\label{proof:RadeBreak}
\noindent
$\bullet$ Invoke Lemma \ref{lem:RademacherCorrelationPredictors} to get that:
\[
\max_{1\leq k\neq l\leq p}\frac1n\lvert\sum_{j=1}^{n}\Phi_{j,k}\Phi_{j,l}\lvert\ \leq \Big[\frac{(2+c)8}{3C_{1}}\Big]^{1/2}\frac1{\sqrt s}\,,
\]
with probability greater than $1-2p^{-c}$.

$\bullet$ Set $r_{0}:=\sigma(2c\log p/n)^{1/2}$ and $Z_{i}=(1/n)\Phi^{\top}\epsilon$. Observe that $Z_{i}$ is centered Gaussian random variable with variance less than $\sigma^{2}/n$. Taking union bounds, it holds:
\begin{align*}
\mathbb P[(1/n)\lVert\Phi^{\top}\epsilon\lVert_{\infty}> r_{0}]\leq \sum_{i=1}^{p}\mathbb P[\lvert Z_{i}\lvert> \sqrt{2c}\sqrt{\log p}\,\sigma/\sqrt n]\leq p^{1-c}\,,
\end{align*}
using $\lVert\Phi_{i}\lVert_{2}^{2}= n$ and the fact that, for $\sqrt{2c}\sqrt{\log p}\geq\sqrt{2\log 2} $, we have:
\[
\mathbb P[\lvert\mathcal N(0,1)\lvert>\sqrt{2c}\sqrt{\log p}]\leq\frac1{\sqrt{\pi\log 2}}\exp(-c\log p)\leq p^{-c}\,.
\]
$\bullet$ From Lemma \ref{lem:RademacherRIP} and Lemma \ref{lem:RademacherUDP} with $\delta=9/50$ and $\kappa=4/9$, it holds that, with a probability greater than $1-2\exp(-C_{3}n)$, for all $\gamma\in\R^{p}$ and for $T\subseteq\{1,\dotsc,p\}$ such that $\abs T\leq s$,
\begin{equation*}
\norm{\gamma_{T}}_{1}\leq 4.4128\Big(\frac sn\Big)^{1/2}\norm{\Phi\gamma}_{2}+\frac49\norm\gamma_{1}\,.
\end{equation*}
Observe that $C_{1}=5/(c_{1}\delta^{2})$, $C_{2}=c_{2}/\delta^{2}$ and $C_{3}=c_{3}C_{1}$ where $c_{1}, c_{2}, c_{3}$ are universal constants appearing in Lemma \ref{lem:RademacherRIP} and $\delta=9/50$.

\noindent
$\bullet$ Invoke Lemma \ref{lem:UDPimpliesSupportRecovery} with parameters $\rho=4.4128/\sqrt n$, $\kappa=4/9$, $\theta_{2}=1$ and $\theta_{1}=((2+c)8/(3C_{1}s))^{1/2}$, to get that for all regularizing parameter $r\geq r_{1}:= 31.74r_{0}$,
\begin{equation*}
\lVert\hat S-S\lVert_{\infty}\leq\Big(1.0316+799\Big(\frac{2+c}{C_{1}}\Big)^{1/2}\sqrt s\Big)r\,,
\end{equation*}
on the event $\{(1/n)\lVert{\Phi^{\top}\epsilon}\lVert_{\infty}\leq r_{0}\}$.
\end{proof}

\begin{rem}
Observe that \eqref{eq:RademacherLinftyConsitency} reads:
\[
\lVert\hat S-S\lVert_{\infty}\leq [\frac{r}{r_{1}}][C'_{1}+\frac{C'_{2}}{\sqrt s}]\,\sigma\,\sqrt{\frac{C_{1}s^{2}\log(C_{2} p)}n}\,.
\]
where $C_{1},C_{2},C'_{1},C'_{2}>0$ are constants. It shows that, for all $\alpha>0$, Thresholded-lasso exactly recovers the true support if the non-zero coefficients are above a threshold that is proportional to $\sigma s^{\frac{1-\alpha}2}$ from $n=\mathcal O(s^{1+\alpha}\log p)$ observations. Hence, we have tackled the regime $0<\alpha<1$ where the elementary analysis of Theorem \ref{t:linfinirade} fails to  be applicable.
\end{rem}


\section{Conclusions}

We have presented a new and performant method for estimating Sobol indices in high-dimensional additive models. We have shown that this method can  lead to very good results in terms of computational costs. Besides, the error analysis of our algorithm led us to propose the results in Section \ref{sec:RademacherBreaking}, which are also of interest outside of the Sobol indices context, and which gives support recovery property for thresholded LASSO that are, to our best knowledge, greatly improving the results of the literature.


\appendix

\section{Proof of the theorems}
\label{App:ProofA}
\subsection{Proof of Theorem \ref{t:linfini}}
\label{proof:TheoBer}
We capitalize on \cite{lounici2008sup, buhlmann2011statistics,zhao2006model} to prove sup-norm error bound when the design matrix has Bernoulli distribution. 

\subsubsection*{Step 1: Rescaling}
We rewrite \eqref{e:modlin} as $\tilde E = \tilde \Phi S + \tilde \epsilon$ where:
\[ \tilde E = \frac{1}{\sqrt\mu} E, \;\; \tilde \Phi = \frac{1}{\sqrt\mu} \Phi, \;\; \tilde \epsilon = \frac{1}{\sqrt\mu}\epsilon. \]
Note $\widehat S$ satisfies:
\[
\widehat S \in \argmin_{U \in \R^p} \,
\left( \frac{1}{n} \norm{ \tilde E - \tilde \Phi U }_2^2 + 2 \tilde r \norm{U}_1 \right)
\]
with 
\begin{equation}
\label{e:defrtilde}
\tilde r=r/\mu=A\sigma\sqrt{\frac{1+\delta}{\mu}}\sqrt{\frac{\ln p}{n}}.
\end{equation}

\subsubsection*{Step 2: Expectation and concentration}
We define:
\[ \Psi = \frac 1n \tilde\Phi^T \tilde\Phi = \frac{1}{n \mu} \Phi^T \Phi. \]
Thanks to the rescaling above, we have, for all $i=1, \ldots,p$:
\[ \E( \Psi_{ii} ) = \frac 1n \sum_{k=1}^n \E( \tilde \Phi_{ki}^2 ) = 1, \]
and, for all $j=1, \ldots, p$, $j \neq i$:
\[ \E( \Psi_{ij} ) = \frac 1n \sum_{k=1}^n \E( \tilde \Phi_{ki} \tilde \Phi_{kj} ) = \mu. \]
Besides, Hoeffding's inequality \cite{Hoeffding1963} gives that for all $i=1,\ldots,p$ and any $\delta>0$,
\[ \mathbb P \left(|\Psi_{ii}-1|\geq\delta \right) = \mathbb P\left(\left|\frac 1n \sum_{k=1}^n (\Phi_{ki}^2 - \mu)\right|\geq\delta\mu\right) \leq 2 \exp( - 2 n \delta^2 \mu^2 ), \] 
and, similarly, for any $j\neq i$,
\[ \mathbb P \left(|\Psi_{ij}-\mu|\geq\delta \right) \leq 2 \exp \left(- 2 n \delta^2 \mu^2 \right). \] 
Thus, by union bound:
\[ \mathbb P \left(\max_{i=1,\ldots,p} |\Psi_{ii}-1|\geq\delta \right) \leq 2 \exp \left( - 2 n \delta^2 \mu^2 + \ln p \right), \] 
and:
\begin{align*}
\mathbb P \left(\max_{\substack{i=1,\ldots,p \\ j=1,\ldots,p \\ j\neq i}} |\Psi_{ij}-\mu|\geq\delta \right) &\leq 2 \exp \left( - 2 n \delta^2 \mu^2 + \ln \frac{p(p-1)}{2} \right)\,, \\
&\leq \exp \left( - 2 n \delta^2 \mu^2 + 2 \ln p \right).
\end{align*}

\subsubsection*{Step 3: Noise control}
We proceed as in the proof of Lemma 1 of \cite{lounici2008sup}. We define, for $i=1,\ldots,p$:
\[ Z_i = \frac{1}{n} \sum_{j=1}^n \tilde\Phi_{ji} \tilde\epsilon_j = \frac 1n \left( \tilde\Phi^T \tilde\epsilon \right)_i. \]
We define the following event:
\[ \mathcal B = \left\{ \max_{i=1,\ldots,p} |\Psi_{ii}-1|\leq\delta \right\}. 
					 \]
\noindent
For a given $\Phi$, we denote by $\Sigma=\Sigma(\Phi)$ the covariance matrix of $\epsilon$, hence the covariance matrix of $\tilde\epsilon$ is $\Sigma/\mu$. Note that, as a function of $\Phi$, $\Sigma$ is also a random variable. We also denote by $\Var Z_i$ the variance of $Z_i$ for a fixed $\Phi$, which is also a $\Phi$-mesurable random variable. Conditionally on $\mathcal B$, we have:
\begin{eqnarray*}
\Var Z_i &=& \frac{1}{n^2} \Var\left[ \left(\tilde\Phi^T \tilde\epsilon\right)_i \right] \\
&=& \frac{1}{\mu n^2} e_i^T \left( \tilde\Phi^T \Sigma \tilde\Phi \right) e_i \text{ where } (e_i)_k=\left\{\begin{array}{l} 1 \text{ if } i=k \\ 0 \text{ else} \end{array}\right. \\
&=& \frac{1}{\mu n^2} (\tilde \Phi e_i)^T \Sigma (\tilde \Phi e_i) \\
&\leq& \frac{1}{\mu n^2} \sigma^2 \norm{\tilde\Phi e_i}_2^2 \\
&=& \frac{1}{n \mu} \sigma^2 e_i^T \Psi e_i \\
&\leq& \frac{1}{n \mu} \sigma^2 (1+\delta) \text{ as } \mathcal B \text{ holds. }
\end{eqnarray*}
Now consider the following event:
\[ \mathcal A = \bigcap_{i=1}^p \{ |Z_i|\leq \frac{\tilde r}{2} \}. \]
We have:
\[ 
\mathbb P(\mathcal A \cap \mathcal B) = \mathbb P(\mathcal A | \mathcal B) \mathbb P(\mathcal B). \]
From union bound and standard results on Gaussian tails, we get:
\begin{eqnarray*}
\mathbb P(\mathcal A | \mathcal B) &\geq& 1 - p \exp\left( - \frac{n \mu}{2 \sigma^2 (1+\delta)} \left(\frac{\tilde r}{2}\right)^2 \right) \\
&\geq& 1 - p^{1-\frac{A^2}{8}}\,
\end{eqnarray*}
by using \eqref{e:defrtilde}. Hence, step 2 gives:
\[ \mathbb P(\mathcal A \cap \mathcal B) \geq \left( 1 - p^{1-A^2/8} \right) \left( 1 -  2 \exp \left( - 2 n \delta^2 \mu^2 + \ln p \right) \right). \] 

\noindent
\emph{Remark: } following Remark 3 (given after the statement of the proven theorem), one can precisely account for the non-gaussianity of the $\epsilon$ noise by subtracting a correction term to minor $\mathbb P(\mathcal A|\mathcal B)$, by using the Berry-Esseen theorem for the $\hat S$ estimator given in \cite{gamboa2013statistical}.

\medskip

\noindent
Now suppose that $\mathcal A \cap \mathcal B$ is realized. We have:
\[ \frac 1n \norm{ \tilde\Phi^T \epsilon }_\infty \leq \frac{\tilde r}{2}, \]
where
\[ \norm{ v }_\infty = \max |v_i|. \]
Set $\Delta=S - \widehat S$. We have:
\begin{eqnarray*}
\norm{\Psi\Delta}_\infty &=& \frac 1 n \norm{ \tilde \Phi^T \tilde \Phi \Delta }_\infty\,, \\
&=& \frac 1 n \norm{ \tilde \Phi^T \tilde \Phi S - \tilde \Phi^T \tilde \Phi \widehat S }_\infty\,, \\
&=& \frac 1 n \norm{ \tilde \Phi^T \tilde E - \tilde\Phi^T \epsilon - \tilde\Phi^T \tilde\Phi \widehat S}_\infty\,, \\
&\leq& \frac 1 n \norm{ \tilde \Phi^T \left( \tilde E - \tilde\Phi\widehat S \right) }_\infty + \frac 1 n \norm{ \tilde \Phi^T \epsilon }_\infty\,.
\end{eqnarray*}
As the Dantzig constraint:
\[ \norm{ \frac 1n \tilde\Phi^T \left(\tilde E - \tilde\Phi \widehat S \right) }_\infty \leq \tilde r \]
holds, see \cite{lounici2008sup}, we have:
\begin{equation}\label{e:step3eq} \norm{ \Psi \Delta }_\infty \leq \frac{3 \tilde r}{2}. \end{equation}

\subsubsection*{Step 4: Control of $\norm{\Delta}_1$}
\paragraph{Step 4a: Majoration of ${\Delta^T \Psi \Delta}$.} We have, on the event $\mathcal A \cap \mathcal B$:
\begin{eqnarray*}
\abs{\Delta^T \Psi \Delta} &\leq& \norm{\Psi\Delta}_\infty \norm{\Delta}_1 \\
&\leq& \frac{3\tilde r}{2} \left( \norm{\Delta_J}_1 + \norm{\Delta_{J^c}}_1 \right),
\end{eqnarray*}
by introducing the $\Delta_J$ and $\Delta_{J^c}$ vectors defined by:
\[ \left(\Delta_J\right)_i = \left\{ \begin{array}{l} \Delta_i \text{ if } S_i \neq 0 \\ 0 \text{ else} \end{array} \right. \;\;\;
\left(\Delta_{J^c}\right)_i = \left\{ \begin{array}{l} 0 \text{ if } S_i \neq 0 \\ \Delta_i \text{ else} \end{array} \right. \]
We recall that $\norm{\Delta_{J^c}}_1 \leq 3 \norm{\Delta_J}_1$ (see \cite{lounici2008sup}, Lemma 1, (9)). Hence, on $\mathcal A \cap \mathcal B$,
\begin{equation}\label{e:majo} \abs{\Delta^T \Psi \Delta} \leq 6 \tilde r \norm{\Delta_J}_1. \end{equation}

\paragraph{Step 4b: Minoration of ${\Delta^T \Psi \Delta}$.} 
Let's introduce the circulant matrix $M$:
\[ M = \begin{pmatrix} 1 & \mu & \cdots & \mu \\
                      \mu&  1  & \ddots & \vdots \\
							 \vdots&\ddots&\ddots&\mu\\
							 \mu & \cdots & \mu & 1 \end{pmatrix} \]
whose smallest eigenvalue is $1-\mu$ (see \cite{graytoeplitz}). Hence:
\begin{eqnarray*}
\Delta^T \Psi \Delta &=& \Delta^T M \Delta + \Delta^T (\Psi - M) \Delta \\
&\geq& (1-\mu) \norm{\Delta}_2^2 - | \Delta^T (\Psi - M) \Delta | \\
&\geq& (1-\mu) \norm{\Delta_J}_2^2 - | \Delta^T (\Psi - M) \Delta | \\
&\geq& \frac{1-\mu}{s} \norm{\Delta_J}_1^2 - | \Delta^T (\Psi - M) \Delta |\,,
\end{eqnarray*}
since $\Delta_J$ has $s$ nonzero components. We have:
\begin{equation}\label{e:majdelpsi} |\Delta^T (\Psi-M) \Delta| \leq \norm{\Delta}_1 \norm{(\Psi-M)\Delta}_{\infty} \leq 4 \norm{\Delta_J}_1 \norm{(\Psi-M)\Delta}_{\infty}. \end{equation}
Now define the event:
\[ \mathcal C = \left\{ \max_{\substack{i=1,\ldots,p \\ j=1,\ldots,p \\ j\neq i}} |\Psi_{ij}-\mu|\geq\delta \right\}. \]
It is clear that, on $\mathcal B \cap \mathcal C$, all entries of $\Psi-M$ are absolutely bounded by $\delta$. Hence, on $\mathcal B \cap \mathcal C$, 
\[ \norm{(\Psi-M)\Delta}_{\infty} \leq \delta \norm{\Delta}_1 \leq 4 \delta \norm{\Delta_J}_1, \]
and, by \eqref{e:majdelpsi}:
\[ |\Delta^T (\Psi-M) \Delta| \leq 16 \delta \norm{\Delta_J}_1^2, \]
which gives:
\begin{equation}\label{e:mino} \Delta^T \Psi \Delta \geq \left( \frac{1-\mu}{s} - 16 \delta \right) \norm{\Delta_J}_1^2. \end{equation}

\paragraph{Step 4c: Majoration of $\norm{\Delta}_1$.} By using \eqref{e:majo} and \eqref{e:mino}, we get that on $\mathcal A \cap \mathcal B \cap \mathcal C$:
\[ \norm{\Delta_J}_1 \leq \frac{ 6 \tilde r }{ \frac{1-\mu}{s} - 16\delta }, \]
hence:
\begin{equation}\label{e:majdelta1} \norm{\Delta}_1 \leq \frac{ 24 \tilde r }{ \frac{1-\mu}{s} - 16\delta }. \end{equation}

\subsubsection*{Step 5: Majoration of $\norm{\Delta}_\infty$} 
On $\mathcal A \cap \mathcal B \cap \mathcal C$, we have:
\begin{eqnarray*}
\norm{\Delta}_\infty &\leq& \norm{\Psi\Delta}_\infty + \norm{\Psi\Delta-\Delta}_\infty \\
&\leq& \frac{3 \tilde r}{2} + \norm{(\Psi-\text{Id})\Delta}_\infty \;\; \text{ by using } \eqref{e:step3eq} \\
&\leq& \frac{3 \tilde r}{2} + (\mu+\delta) \norm{\Delta}_1  \;\; \text{ since each entry in } \Psi-\text{Id} \text{ is less than } \mu+\delta \\
&\leq& \left( \frac 3 2 + \frac{ 24 (\mu+\delta) }{ \frac{1-\mu}{s} - 16\delta } \right) \tilde r \;\; \text{ by using } \eqref{e:majdelta1} 
\end{eqnarray*}
\noindent
To finish, it is easy to see, using step 2, that $\mathbb P(\mathcal A \cap \mathcal B \cap \mathcal C) \geq 1 - \alpha$. \hfill $\square$


\subsection{Proof of Theorem \ref{t:linfinirade}}
\label{proof:TheoRade}
We rely on the result of \cite{lounici2008sup}. Observe that:
\[ \Psi = \frac 1n \Phi^T \Phi. \]
We have for all $j=1, \ldots, p$, $j \neq i$:
\[ \E( \Psi_{ij} ) = \frac 1n \sum_{k=1}^n \E( \tilde \Phi_{ki} \tilde \Phi_{kj} ) = 0. \]
Hence, for any $\delta>0$, Hoeffding's inequality and union bound give:
\[ \mathbb P \left(\max_{\substack{i=1,\ldots,p \\ j=1,\ldots,p \\ j\neq i}} |\Psi_{ij}|\geq\delta \right) \leq \exp \left( - n \frac{\delta^2}{2} + 2 \ln p \right). \]
We also notice that $\Psi_{ii}=1$ for all $i$. Hence, Assumptions 1 and 2 of Theorem 1 in \cite{lounici2008sup} are satisfied with probability as described in the statement of the theorem.

\section{Exact support recovery using Thresholded-Lasso}
\subsection{A new result}
\label{sec:GeneralResults}
\label{App:WithoutWelch}

\noindent
We begin with two lemmas.
\begin{lem}[Lemma A.2 in \cite{de2012remark}]
Let $r>r_{0}>0$ and $\hat S$ a solution to \eqref{e:defSest} with regularizing parameter $r$. Set $\Delta=\hat S-S$. On the event $\{(1/n)\lVert{\Phi^{\top}\epsilon}\lVert_{\infty}\leq r_{0}\}$, it holds that for all $T\subseteq\{1,\dotsc,p\}$ such that $\abs T\leq s$,
\begin{equation}\label{e:IntermediatUDP}
\frac1{2r}\Big[\frac1{2n}\lVert\Phi\Delta\lVert_{2}^{2}+(r-r_{0})\lVert\Delta\lVert_{1}\Big]\leq\lVert\Delta_{T}\lVert_{1}+\lVert S_{T^{c}}\lVert_{1}\,.
\end{equation}
\end{lem}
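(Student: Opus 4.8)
The plan is to run the classical LASSO \emph{basic inequality} argument and then recast it into a cone-type bound attached to the set $T$. First I would exploit the optimality of $\hat S$: since $\hat S$ minimizes $U\mapsto \frac1n\norm{E-\Phi U}_2^2+2r\norm{U}_1$ over $\R^p$, testing the minimizer against the feasible point $S$ gives
\[
\frac1n\norm{E-\Phi\hat S}_2^2+2r\norm{\hat S}_1\leq\frac1n\norm{E-\Phi S}_2^2+2r\norm{S}_1.
\]
Substituting the model $E=\Phi S+\epsilon$, so that $E-\Phi\hat S=\epsilon-\Phi\Delta$ and $E-\Phi S=\epsilon$, expanding the square and cancelling the common $\frac1n\norm{\epsilon}_2^2$ on both sides, I obtain
\[
\frac1n\norm{\Phi\Delta}_2^2+2r\norm{\hat S}_1\leq 2r\norm{S}_1+\frac2n\langle\epsilon,\Phi\Delta\rangle.
\]

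The second step is to control the noise cross-term on the prescribed event. Rewriting $\frac2n\langle\epsilon,\Phi\Delta\rangle=\frac2n\langle\Phi^{\top}\epsilon,\Delta\rangle$ and applying H\"older's inequality yields $\frac2n\langle\Phi^{\top}\epsilon,\Delta\rangle\leq\frac2n\norm{\Phi^{\top}\epsilon}_\infty\norm{\Delta}_1\leq 2r_{0}\norm{\Delta}_1$, where the last bound uses exactly the event $\{(1/n)\norm{\Phi^{\top}\epsilon}_\infty\leq r_{0}\}$. This turns the previous display into $\frac1n\norm{\Phi\Delta}_2^2+2r\norm{\hat S}_1\leq 2r\norm{S}_1+2r_{0}\norm{\Delta}_1$.

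The third step is the support decomposition, which is where the index set $T$ enters. Splitting both $\ell_{1}$ norms over $T$ and its complement and using the triangle inequality in the forms $\norm{\hat S_T}_1=\norm{S_T+\Delta_T}_1\geq\norm{S_T}_1-\norm{\Delta_T}_1$ and $\norm{\hat S_{T^c}}_1=\norm{S_{T^c}+\Delta_{T^c}}_1\geq\norm{\Delta_{T^c}}_1-\norm{S_{T^c}}_1$, I would bound $\norm{S}_1-\norm{\hat S}_1\leq\norm{\Delta_T}_1-\norm{\Delta_{T^c}}_1+2\norm{S_{T^c}}_1$. Inserting this, and writing $\norm{\Delta}_1=\norm{\Delta_T}_1+\norm{\Delta_{T^c}}_1$ in the noise term, collects everything into
\[
\frac1n\norm{\Phi\Delta}_2^2+(2r-2r_{0})\norm{\Delta_{T^c}}_1\leq(2r+2r_{0})\norm{\Delta_T}_1+4r\norm{S_{T^c}}_1.
\]

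Finally I would divide through by $2$ and add $(r-r_{0})\norm{\Delta_T}_1$ to both sides; on the right this recombines the $\Delta_T$ coefficients, since $(r+r_{0})+(r-r_{0})=2r$, producing $\frac1{2n}\norm{\Phi\Delta}_2^2+(r-r_{0})\norm{\Delta}_1\leq 2r\norm{\Delta_T}_1+2r\norm{S_{T^c}}_1$, which is exactly \eqref{e:IntermediatUDP} after dividing by $2r$. No step here is genuinely deep; the only real care is in the bookkeeping of the factor $\tfrac12$ and the recombination of the $\norm{\Delta_T}_1$ terms so that the stated coefficients land precisely, and in making sure the single use of the event is what produces the $r_{0}$ contributions.
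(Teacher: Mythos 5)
Your proof is correct and follows essentially the same route as the paper's: the basic inequality from optimality of $\hat S$, H\"older's inequality combined with the event $\{(1/n)\lVert\Phi^{\top}\epsilon\lVert_{\infty}\leq r_{0}\}$ to absorb the noise term, the triangle-inequality decomposition over $T$ and $T^{c}$, and the final recombination of the $\lVert\Delta_{T}\lVert_{1}$ coefficients. The only difference is cosmetic (you carry the factor $2$ in the objective instead of halving it at the outset), and your bookkeeping of the coefficients $(r\pm r_{0})$ lands exactly on \eqref{e:IntermediatUDP}.
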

\begin{proof}
By optimality in \eqref{e:defSest}, we get:
\[
\frac1{2n}\lVert E-\Phi\hat S\lVert_{2}^{2}+r\lVert \hat S\lVert_{1}\leq \frac1{2n}\lVert \epsilon\lVert_{2}^{2}+r\lVert  S\lVert_{1}\,.
\]
It yields,
\[
\frac1{2n}\lVert \Phi\Delta\lVert_{2}^{2}-\frac1n\langle\Phi^{\top}\epsilon,\Delta\rangle+r\lVert \hat S\lVert_{1}\leq r\lVert  S\lVert_{1}\,.
\]
Let $T\subseteq\{1,\dotsc,p\}$ such that $\abs T\leq s$. We assume that $(1/n)\lVert{\Phi^{\top}\epsilon}\lVert_{\infty}\leq r_{0}$. Invoking H\"older's inequality, we have:
\[
\frac1{2n}\lVert \Phi\Delta\lVert_{2}^{2}+r\lVert \hat S_{J^{c}}\lVert_{1}\leq r(\lVert S_{J}\lVert_{1}-\lVert \hat S_{J}\lVert_{1})+r\lVert S_{J^{c}}\lVert_{1}+r_{0}\lVert  \Delta\lVert_{1}\,.
\]
Adding $r\lVert S_{J^{c}}\lVert_{1} $ on both sides, observe that:
\begin{equation}\label{e:tubeIntermediate}
\frac1{2n}\lVert \Phi\Delta\lVert_{2}^{2}+(r-r_{0})\lVert \Delta_{J^{c}}\lVert_{1}\leq (r+r_{0})\lVert \Delta_{J}\lVert_{1}+2r\lVert S_{J^{c}}\lVert_{1}\,.
\end{equation}
Adding $(r-r_{0})\lVert \Delta_{J}\lVert_{1} $ on both sides, we conclude the proof.
\end{proof}

\begin{lem}[Theorem 2.1 in \cite{de2012remark}]\label{lem:ConsistL1}
Assume that for all $\gamma\in\R^{p}$, for all $T\subseteq\{1,\dotsc,p\}$ such that $\abs T\leq s$,
\begin{equation}\label{eq:UDP}
\norm{\gamma_{T}}_{1}\leq\rho\sqrt s\norm{\Phi\gamma}_{2}+\kappa\norm\gamma_{1}\,.
\end{equation}
where $\rho>0$ and $1/2>\kappa>0$. Moreover, assume that the regularizing parameter $r$ of the convex program \eqref{e:defSest} enjoys $r>{r_{0}}/({1-2\kappa})$. Then, on the event $\{(1/n)\lVert{\Phi^{\top}\epsilon}\lVert_{\infty}\leq r_{0}\}$, any solution $\hat S$ to \eqref{e:defSest} satisfies:
\[
\lVert\hat S-S\lVert_{1}\leq\frac{2rn\rho^{2}s}{1-(r_{0}/r)-2\kappa}\,.
\]
\end{lem}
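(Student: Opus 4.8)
The plan is to combine the intermediate inequality of the preceding lemma with the Universal Distortion hypothesis \eqref{eq:UDP}, and then to make the prediction-error term $\norm{\Phi\Delta}_2$ disappear by a single, carefully weighted application of Young's inequality. Throughout I set $\Delta=\hat S-S$ and work on the event $\{(1/n)\norm{\Phi^\top\epsilon}_\infty\leq r_0\}$. Since the target vector $S$ is $s$-sparse (its number of nonzero entries being $s$), I would take $T$ to be the support of $S$, so that $\abs T\leq s$ and the tail term vanishes: $\norm{S_{T^c}}_1=0$.

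First I would invoke the preceding lemma with this choice of $T$. Because the tail term is zero, inequality \eqref{e:IntermediatUDP}, after multiplication by $2r$, reads $\frac{1}{2n}\norm{\Phi\Delta}_2^2+(r-r_0)\norm{\Delta}_1\leq 2r\norm{\Delta_T}_1$. Next I would apply the hypothesis \eqref{eq:UDP} to the vector $\gamma=\Delta$ with the same $T$, which bounds $\norm{\Delta_T}_1\leq\rho\sqrt s\,\norm{\Phi\Delta}_2+\kappa\norm{\Delta}_1$. Substituting this into the right-hand side and moving the resulting $2r\kappa\norm{\Delta}_1$ to the left isolates the coefficient of $\norm{\Delta}_1$:
\[
\frac{1}{2n}\norm{\Phi\Delta}_2^2+\big(r-r_0-2r\kappa\big)\norm{\Delta}_1\leq 2r\rho\sqrt s\,\norm{\Phi\Delta}_2\,.
\]

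The decisive step is to eliminate the right-hand side. Writing $x=\norm{\Phi\Delta}_2$ and using Young's inequality in the form $2r\rho\sqrt s\,x\leq \frac{1}{2n}x^2+2nr^2\rho^2 s$ (equivalently, completing the square, since $\frac{1}{2n}x^2-2r\rho\sqrt s\,x+2nr^2\rho^2 s=\frac{1}{2n}(x-2nr\rho\sqrt s)^2\geq0$), the positive quadratic term $\frac{1}{2n}x^2$ produced by the intermediate lemma cancels exactly, leaving $\big(r-r_0-2r\kappa\big)\norm{\Delta}_1\leq 2nr^2\rho^2 s$. Finally, the coefficient $r-r_0-2r\kappa=r\,(1-r_0/r-2\kappa)$ is strictly positive precisely because the hypothesis $r>r_0/(1-2\kappa)$ forces $1-r_0/r-2\kappa>0$; dividing by it yields $\norm{\Delta}_1\leq 2nr\rho^2 s/(1-r_0/r-2\kappa)$, which is the announced bound.

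The only non-routine point is the joint treatment of the two occurrences of $\norm{\Phi\Delta}_2$: the positive term $\frac{1}{2n}\norm{\Phi\Delta}_2^2$ supplied by the preceding lemma and the linear term $\rho\sqrt s\,\norm{\Phi\Delta}_2$ supplied by \eqref{eq:UDP}. I would tune the Young weight so that, after splitting $2r\rho\sqrt s\,x$, its quadratic part matches $\frac{1}{2n}x^2$ exactly; this is what makes the prediction error drop out and produces a bound depending only on the sparsity $s$ and the constants $\rho,\kappa,r,r_0,n$. The positivity condition on $r$ is exactly what is needed to divide through at the last step, so no additional hypotheses enter.
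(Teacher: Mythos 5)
Your proof is correct and follows essentially the same route as the paper's: take $T$ to be the support of $S$ so the tail term vanishes, combine inequality \eqref{e:IntermediatUDP} with the hypothesis \eqref{eq:UDP}, and absorb the linear term in $\norm{\Phi\Delta}_2$ into the quadratic one. Your ``Young's inequality / completing the square'' step is exactly the paper's observation that the concave polynomial $x\mapsto-\frac{1}{4rn}x^{2}+\rho\sqrt{s}\,x$ is bounded by $rn\rho^{2}s$, just written after multiplying through by $2r$ instead of before.
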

\begin{proof}
Assume that $(1/n)\lVert{\Phi^{\top}\epsilon}\lVert_{\infty}\leq r_{0}$. Using \eqref{eq:UDP} and \eqref{e:IntermediatUDP} with $J=T$, the support of $S$, we get:
\[
\frac1{2r}\Big[\frac1{2n}\lVert\Phi\Delta\lVert_{2}^{2}+(r-r_{0})\lVert\Delta\lVert_{1}\Big]\leq\rho\sqrt s\norm{\Phi\Delta}_{2}+\kappa\norm\Delta_{1}\,,
\]
where $\Delta=\hat S-S$. It yields,
\[
\Big[\frac1{2}(1-\frac{r_{0}}r)-\kappa\Big]\lVert\Delta\lVert_{1}\leq-\frac1{4rn}\lVert\Phi\Delta\lVert_{2}^{2}+\rho\sqrt s\norm{\Phi\Delta}_{2}\leq rn\rho^{2}s\,,
\]
using the fact that the polynomial $x\mapsto-1/(4rn)x^{2}+\rho\sqrt s x$ is not greater than $rn\rho^{2}s$.
\end{proof}

\noindent 
We deduce the following new result on exact support recovery when using Thresholded-Lasso.

\begin{lem}[Exact support recovery with Thresholded-Lasso]
\label{lem:UDPimpliesSupportRecovery}
Assume that for all $\gamma\in\R^{p}$, for all $T\subseteq\{1,\dotsc,p\}$ such that $\abs T\leq s$,
\begin{equation}\notag
\norm{\gamma_{T}}_{1}\leq\rho\sqrt s\norm{\Phi\gamma}_{2}+\kappa\norm\gamma_{1}\,.
\end{equation}
where $\rho>0$ and $1/2>\kappa>0$. Moreover, assume that:
\begin{equation*}
\max_{1\leq k\neq l\leq p}\ \frac1n\lvert\sum_{j=1}^{n}\Phi_{j,k}\Phi_{j,l}\lvert\ \leq\theta_{1}\quad\mathrm{and}\quad\forall i,\ \frac1n\lVert\Phi_{i}\lVert_{2}^{2}\geq\theta_{2}\,,
\end{equation*}
where $\Phi_{i}$ denotes the columns of $\Phi$. Let $r_{0}>0$ and suppose that the regularizing parameter $r$ of the convex program \eqref{e:defSest} enjoys:
\[
r>\frac{r_{0}}{1-2\kappa}\,.
\]
Then, on the event $\{(1/n)\lVert{\Phi^{\top}\epsilon}\lVert_{\infty}\leq r_{0}\}$, any solution $\hat S$ to \eqref{e:defSest} satisfies:
\begin{equation*}
\lVert\hat S-S\lVert_{\infty}\leq\frac1{\theta_{2}}\Big[1+\frac{r_{0}}r+\frac{2n\theta_{1}\rho^{2}s}{1-(r_{0}/r)-2\kappa}\Big]r\,.
\end{equation*}
\end{lem}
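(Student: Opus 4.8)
The plan is to leverage the $\ell_1$ consistency already furnished by Lemma \ref{lem:ConsistL1} and upgrade it to an $\ell_\infty$ bound by exploiting the first-order optimality (KKT) conditions of the convex program \eqref{e:defSest} together with the near-orthonormality of the columns encoded by $\theta_1$ and $\theta_2$. The whole point is that an $\ell_\infty$ control of $\Delta=\hat S-S$ follows once one has an $\ell_1$ control, provided the Gram matrix $\Psi=\frac1n\Phi^\top\Phi$ is diagonally dominant in the relevant quantitative sense.

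First I would write down the subgradient optimality condition. Since $\hat S$ minimizes $U\mapsto\frac1n\norm{E-\Phi U}_2^2+2r\norm{U}_1$, there exists $\hat g\in\partial\norm{\hat S}_1$ (so $\abs{\hat g_i}\leq1$ for every $i$) with $\frac1n\Phi^\top(\Phi\hat S-E)+r\hat g=0$. Substituting $E=\Phi S+\epsilon$ and writing $\Delta=\hat S-S$ and $\Psi=\frac1n\Phi^\top\Phi$, this rearranges to $\Psi\Delta=\frac1n\Phi^\top\epsilon-r\hat g$.

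Next, reading the $i$-th coordinate and isolating the diagonal term gives $\Psi_{ii}\Delta_i=\frac1n(\Phi^\top\epsilon)_i-r\hat g_i-\sum_{j\neq i}\Psi_{ij}\Delta_j$. I would then bound each piece on the right: the noise term by $r_0$ on the event $\{(1/n)\norm{\Phi^\top\epsilon}_\infty\leq r_0\}$; the penalty term by $r$ since $\abs{\hat g_i}\leq1$; and the cross term by $\theta_1\norm{\Delta}_1$ using $\abs{\Psi_{ij}}\leq\theta_1$ for $j\neq i$. Combined with $\Psi_{ii}=\frac1n\norm{\Phi_i}_2^2\geq\theta_2>0$ (which lets me divide safely), this yields $\theta_2\abs{\Delta_i}\leq r_0+r+\theta_1\norm{\Delta}_1$ for every $i$, hence $\norm{\Delta}_\infty\leq\frac1{\theta_2}(r_0+r+\theta_1\norm{\Delta}_1)$.

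Finally I would insert the $\ell_1$ estimate $\norm{\Delta}_1\leq\frac{2rn\rho^2s}{1-(r_0/r)-2\kappa}$ from Lemma \ref{lem:ConsistL1} — whose hypotheses, namely the displayed inequality \eqref{eq:UDP} and $r>r_0/(1-2\kappa)$ (the latter making the denominator positive), are exactly those assumed here — and factor out $r$ to reach the claimed bound. I do not expect a genuine obstacle: the substantive estimate is already hidden inside Lemma \ref{lem:ConsistL1}, so the only points requiring care are the correct handling of the factor $2$ in the penalty when forming the KKT condition and the verification that the diagonal lower bound $\theta_2$ together with the coherence bound $\theta_1$ suffices to convert $\ell_1$ into $\ell_\infty$ coordinatewise.
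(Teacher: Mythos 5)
Your proposal is correct and follows essentially the same route as the paper: first-order (KKT) optimality to bound $\lVert\Psi\Delta\rVert_\infty$ by $r+r_0$ on the noise event, the $\ell_1$ bound from Lemma \ref{lem:ConsistL1}, and the diagonal-dominance step via $\theta_1$ and $\theta_2$ to pass to the sup-norm — your coordinatewise handling of the diagonal term is just a rewriting of the paper's comparison of $\Psi$ with $\mathrm{Diag}(\lVert\Phi_1\rVert_2^2/n,\ldots,\lVert\Phi_p\rVert_2^2/n)$. No gap to report.
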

\begin{proof}
The first order optimality conditions of the convex program \eqref{e:defSest} shows that there exists $\tau\in\mathbb R^{p}$ such that $\lVert\tau\lVert_{\infty}\leq1$ and:
\[
\frac1n\Phi^{\top}(E-\Phi\hat S)=r\tau\,.
\]
Set $\Delta=\hat S-S$ and $\Psi=(1/n)\,\Phi^{\top}\Phi$. We assume that $(1/n)\lVert{\Phi^{\top}\epsilon}\lVert_{\infty}\leq r_{0}$. It holds:
\begin{equation}\label{e:tube2}
\lVert\Psi\Delta\lVert_{\infty}\leq r+r_{0}\,.
\end{equation}
Moreover, Lemma \ref{lem:ConsistL1} shows that:
\begin{equation}\label{e:ConsistL1}
\lVert\Delta\lVert_{1}\leq \frac{2nr\rho^{2}s}{1-r_{0}/r-2\kappa}\,.
\end{equation}
Since each entry in the matrix $\Psi-\mathrm{Diag}(\lVert\Phi_{1}\lVert_{2}^{2}/n,\ldots,\lVert\Phi_{p}\lVert_{2}^{2}/n)$ is less than $\theta_{1}$, we deduce that:
\begin{align*}
\theta_{2}\lVert\Delta\lVert_{\infty}&\leq\lVert\Psi\Delta\lVert_{\infty}+\lVert(\Psi-\mathrm{Diag}(\lVert\Phi_{1}\lVert_{2}^{2}/n,\ldots,\lVert\Phi_{p}\lVert_{2}^{2}/n))\Delta\lVert_{\infty}\,,\\
&\leq r+r_{0}+\theta_{1}\lVert\Delta\lVert_{1}\,,\\
&\leq \Big[1+\frac{r_{0}}r+\frac{2n\theta_{1}\rho^{2}s}{1-r_{0}/r-2\kappa}\Big]r\,,
\end{align*}
using \eqref{e:tube2} and \eqref{e:ConsistL1}.
\end{proof}

\subsection{Expander graphs}
\label{App:AppendixGraph}
This subsection is devoted to a new proof of support recovery of Thresholded-Lasso when using adjacency matrices. Given the binary constraint, we choose $\Phi$ as the adjacency matrix of a bi-partite simple graph $G=(A,B,E)$ where $A=\{1,\ldots,p\}$, $B=\{1,\ldots,n\}$ and $E\subseteq A\times B$ denotes the set of edges between $A$ and $B$. In this model, $(\Phi_{ji})_{j,i}$ is equal to $1$ if there exists an edge between $j\in B$ and $i\in A$, and $0$ otherwise. Assume that $G$ is left regular with degree $d$, i.e. $\Phi$ has exactly $d$ ones per column. Consider unbalanced expander graphs defined as follows.
\begin{defn}[$(s,e)$-unbalanced expander]\label{d:Expander}
A $(s,e)$-unbalanced expander is a bi-partite simple graph $G=(A,B,E)$ with left degree $d$ such that for any $I\subset A$ with $\# I\leq s$, the set of neighbors $N(I)$ of $I$ has size:
\begin{equation}\label{eq:Expansion}
 \#{N(I)}\geq(1-e)\,d\,\# I\,.
\end{equation}
The parameter $e$ is called the {expansion constant}.
\end{defn}

\noindent
We recall that expander graphs satisfy the UDP property, see the following lemma.
\begin{lem}\label{lem:ExpanderUDP}
Let $\Phi\in\R^{n\times p}$ be the adjacency matrix of a $(2s,e)$-unbalanced expander with an expansion constant $e<1/2$ and left degree $d$. If the quantities $1/e$ and $d$ are smaller than $p$ then $\Phi$ satisfies for all $\gamma\in\R^{p}$ and for all $T\subseteq\{1,\dotsc,p\}$ such that $\abs T\leq s$,
\begin{equation*}
\norm{\gamma_{T}}_{1}\leq\frac{\sqrt{s}}{(1-2e)\sqrt d}\norm{\Phi\gamma}_{2}+\frac{2e}{1-2e}\norm\gamma_{1}\,.
\end{equation*}
\end{lem}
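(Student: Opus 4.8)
The plan is to derive the claimed inequality from the $\ell_1$ restricted isometry property (RIP-1) enjoyed by adjacency matrices of unbalanced expanders, and then to upgrade its data-fidelity term to the stated $\ell_2$ form by a single Cauchy--Schwarz step carried out on the neighborhood of the support. First I would record the combinatorial consequences of $(2s,e)$-expansion. Since $G$ is left-regular of degree $d$, every column satisfies $\norm{\Phi_i}_2^2=\norm{\Phi_i}_1=d$, and for every $I$ with $\#I\le 2s$ the neighborhood obeys $(1-e)\,d\,\#I\le\#N(I)\le d\,\#I$. A weighted collision count, processing the coordinates of a sparse vector in decreasing order of magnitude (classical for expanders, see \cite{de2012remark} and references therein), then yields the RIP-1: for every $x$ supported on a set of size at most $2s$,
\[
(1-2e)\,d\,\norm{x}_1\le\norm{\Phi x}_1\le d\,\norm{x}_1 .
\]

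Next I would establish a \emph{localized} robust null-space inequality: for every $\gamma\in\R^p$ and every $S$ with $\#S\le s$,
\[
\norm{\gamma_S}_1\le\frac{2e}{1-2e}\,\norm{\gamma_{S^c}}_1+\frac1{(1-2e)d}\,\big\|(\Phi\gamma)|_{N(S)}\big\|_1 ,
\]
where $(\Phi\gamma)|_{N(S)}$ denotes the restriction of $\Phi\gamma$ to the rows indexed by $N(S)$. The key point is to keep the data term on $N(S)$ rather than on all $n$ rows, because $\#N(S)\le ds$ is precisely what will make the forthcoming Cauchy--Schwarz lossless. This inequality is obtained by writing $\Phi\gamma=\Phi\gamma_S+\Phi\gamma_{S^c}$, applying the RIP-1 lower bound to $\Phi\gamma_S$ (whose support lies in $N(S)$), and controlling the interaction of $\Phi\gamma_{S^c}$ on $N(S)$ through the number of collision edges between $S$ and $S^c$.

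Finally I would assemble the statement. Since the right-hand side of the target inequality does not depend on $T$ while $\norm{\gamma_T}_1$ is maximal when $T$ collects the $s$ largest-magnitude coordinates, it suffices to treat that choice and set $S=T$. Applying Cauchy--Schwarz on the at most $ds$ active rows,
\[
\big\|(\Phi\gamma)|_{N(T)}\big\|_1\le\sqrt{\#N(T)}\,\big\|(\Phi\gamma)|_{N(T)}\big\|_2\le\sqrt{ds}\,\norm{\Phi\gamma}_2 ,
\]
so that $\tfrac1{(1-2e)d}\sqrt{ds}=\tfrac{\sqrt s}{(1-2e)\sqrt d}$, and bounding $\norm{\gamma_{S^c}}_1\le\norm{\gamma}_1$ produces the two constants verbatim. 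The hypotheses $1/e<p$ and $d<p$ are used only to ensure the expansion bounds are non-vacuous.

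I expect the main obstacle to be the sharp tail constant $\frac{2e}{1-2e}$ in the localized null-space inequality. A naive per-vertex collision bound yields only the weaker factor $\frac1{1-2e}$; obtaining the factor $2e$ requires the refined weighted counting that orders coordinates by magnitude and exploits that the \emph{total} number of collision edges is globally controlled by the expansion deficit, so that the small tail coordinates contribute only an $O(e)$ fraction. Carrying this out while keeping the data term localized to $N(S)$, so that the final Cauchy--Schwarz loses only $\sqrt{ds}$ rather than $\sqrt n$, is the delicate step; everything else is bookkeeping.
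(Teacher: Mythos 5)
Your proposal follows essentially the same route as the paper's own proof (which it borrows from \cite{deoptimal}): reduce to $T$ being the $s$ largest-magnitude coordinates, combine the RIP-1 bound $\norm{\Phi\gamma_T}_1\geq d(1-2e)\norm{\gamma_T}_1$ of \cite{berinde2008combining} with the block decomposition $T_0=T,T_1,T_2,\dots$ and the expansion-based count of at most $2eds$ collision edges from each $T_l$ into $N(T)$, then finish with Cauchy--Schwarz over the at most $ds$ rows of $N(T)$, exactly as you do. The only (harmless) discrepancy is in your intermediate ``localized null-space inequality'': the shifting bound $|\gamma_i|\leq\norm{\gamma_{T_{l-1}}}_1/s$ pulls in the block $T_0=T$ itself, so this counting yields the tail term $\frac{2e}{1-2e}\norm{\gamma}_1$ rather than $\frac{2e}{1-2e}\norm{\gamma_{S^c}}_1$ as you state it; since you relax $\norm{\gamma_{S^c}}_1$ to $\norm{\gamma}_1$ in the final assembly anyway, your conclusion and constants coincide with the paper's.
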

\begin{proof}
For sake of completeness, we present the proof given in \cite{deoptimal}. Without loss of generality, we can assume that $T$ consists of the largest, in magnitude, coefficients of $\gamma$. We partition the coordinates into sets $T_{0}$, $T_{1}$, $T_{2}$, ... ,$T_{q}$, such that the coordinates in the set $T_{l}$ are not larger than the coordinates in $T_{l-1}$, $l\geq1$, and all sets but the last one $T_{q}$ have size $s$. Observe that we can choose $T_{0}=T$. Let $\Phi'$ be a sub matrix of $\Phi$ containing rows from $N(T)$, the set of neighbors of $T$. Using Cauchy-Schwartz inequality, it holds
\begin{equation*}
\sqrt{sd}\norm{\Phi\gamma}_{2}\geq \sqrt{sd}\norm{\Phi'\gamma}_{2}\geq \frac{\sqrt{sd}}{\sqrt{\lvert{N(T)}\lvert}}\norm{\Phi'\gamma}_{1}\geq \norm{\Phi'\gamma}_{1}.
\end{equation*}
From \cite{berinde2008combining}, we know that:
\begin{equation}\label{eq:RIP1}
\norm{\Phi\gamma_{T}}_{1}\geq d(1-2e)\norm{\gamma_{T}}_{1}\,,
\end{equation}
 Moreover, Eq. \eqref{eq:RIP1} gives:
\begin{align*}
\sqrt{sd}\norm{\Phi\gamma}_{2} \geq&\,\norm{\Phi'\gamma}_{1},\\
\geq &\,\norm{\Phi'\gamma_{T}}_{1}-\sum_{l\geq1}\sum_{(i,j)\in E, i\in T_{l},j \in N(T)}\abs{\gamma_{i}},\\
\geq &\, d(1-2e)\norm{\gamma_{T}}_{1}-\sum_{l\geq1}\lvert E\cap(T_{l}\times N(T))\lvert\,\min_{i\in T_{l-1}}\abs{\gamma_{i}},\\
\geq &\,  d(1-2e)\norm{\gamma_{T}}_{1}-\frac{1}{s}\sum_{l\geq1}\lvert E\cap(T_{l}\times N(T))\lvert\,\norm{\gamma_{T_{l-{1}}}}_{1}.
\end{align*}
From the expansion property \eqref{eq:Expansion}, it follows that, for $l\geq1$, we have:
 \[
 \lvert N(T\cup T_{l})\lvert\geq d(1-e)\lvert T\cup T_{l}\lvert\,.
 \]
 Hence at most $de 2s$ edges can cross from $T_{l}$ to $N(T)$, and so:
\begin{align*}
\sqrt{sd}\norm{\Phi\gamma}_{2}& \geq d(1-2e)\norm{\gamma_{T}}_{1}-de2\sum_{l\geq1}\norm{\gamma_{T_{l-{1}}}}_{1}/s,\\
& \geq d(1-2e)\norm{\gamma_{T}}_{1}-2de\norm\gamma_{1}.
\end{align*}
\end{proof}

\noindent
Observe the columns $\Phi_{i}$ of the adjacency matrix $\Phi$ have small $\ell_{2}$-norm compared to the $\ell_{2}$-norm of the noise, namely:
\[
\lVert \Phi_{i}\lVert_{2}^{2}= d\ll \sigma^{2} n=\mathbb E(\lVert \epsilon\lVert_{2}^{2})\,.
\]
A standard hypothesis in the exact recovery frame \cite{bickel2009simultaneous,buhlmann2011statistics} is that the signal-to-noise ratio is close to one. This hypothesis is often presented as the empirical covariance matrix has diagonal entries equal to $1$. However, in our setting, the signal-to-noise ratio goes to zero and eventually we observe only noise. To prevent this issue, we use a noise model adapted to the case of sparse designs. Hence, we assume subsequently that the noise level is comparable to the signal power:
\begin{equation}\label{e:NoiseSparseGraphs}
\forall i\in\{1,\ldots,n\}\,,\quad \epsilon_{i}\mathrm{\ is\ Gaussian\ and\ }\Var(\epsilon_{i})\leq\tilde\sigma^{2}\frac{\lVert \Phi_{i}\lVert_{2}^{2}}n\,,
\end{equation}
so that $\lVert \Phi_{i}\lVert_{2}^{2}/\mathbb E(\lVert \epsilon\lVert_{2}^{2})\geq1/\tilde\sigma^{2}$.

\begin{thm}[Exact recovery with expander graphs]\label{t:Expander}
Let $A>\sqrt 2$ and $\Phi\in\mathbb \{0,1\}^{n\times p}$ be the adjacency matrix of a $(2s,e)$-expander graph with expansion constant $1/p<e<1/6$ and left degree $d$. Assume that \eqref{e:NoiseSparseGraphs} holds. Let $\hat S$ be any solution to \eqref{e:defSest} with regularizing parameter:
\[
r\geq r_{1}:=2A\tilde\sigma\Big[\frac{1-2e}{1-6e}\Big]\Big[\frac{d(\log p)^{1/2}}{n^{3/2}}\Big]\,,
\]
Then, with probability greater than $1-p^{1-A^{2}/2}$, it holds:
\begin{equation*}
\lVert\hat S-S\lVert_{\infty}\leq A\tilde\sigma\Big[\frac{\log p}{n}\Big]^{1/2}\Big[1+\frac{2(1-2e)}{1-6e}+\frac{16es}{(1-6e)^{2}}\Big]\frac r{r_{1}}\,.
\end{equation*}
\end{thm}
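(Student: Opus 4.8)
The plan is to follow the template of the proof of Theorem \ref{thm:tiebreak}: read off the geometric constants $(\rho,\kappa,\theta_1,\theta_2)$ attached to the design, control the Gaussian noise term $(1/n)\norm{\Phi^\top\epsilon}_\infty$, feed everything into Lemma \ref{lem:UDPimpliesSupportRecovery}, and finally simplify the resulting bound into the advertised closed form.

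First I would extract the deterministic quantities. Lemma \ref{lem:ExpanderUDP} applied to the $(2s,e)$-expander yields the UDP inequality \eqref{eq:UDP} with $\rho=1/((1-2e)\sqrt d)$ and $\kappa=2e/(1-2e)$; note that the hypothesis $e<1/6$ is exactly what guarantees $\kappa<1/2$. Since the graph is left $d$-regular, every column satisfies $\frac1n\norm{\Phi_i}_2^2=d/n$, so $\theta_2=d/n$. For the coherence, observe that $\sum_j\Phi_{j,k}\Phi_{j,l}$ counts the common neighbors of the columns $k\neq l$; applying the expansion property \eqref{eq:Expansion} to the pair $I=\{k,l\}$ (of size $2\leq 2s$) gives $\#N(I)\geq 2d(1-e)$, hence at most $2de$ common neighbors, so $\theta_1=2de/n$.

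Next I would handle the noise. Set $r_0:=A\tilde\sigma\,d\,(\log p)^{1/2}/n^{3/2}$. Each coordinate $Z_i=(1/n)(\Phi^\top\epsilon)_i$ is centered Gaussian, and under the sparse noise model \eqref{e:NoiseSparseGraphs} its variance is at most $\frac1{n^2}\norm{\Phi_i}_2^2\cdot\tilde\sigma^2 d/n=\tilde\sigma^2 d^2/n^3$; thus $r_0$ equals $A(\log p)^{1/2}$ standard deviations. The standard Gaussian tail bound $\mathbb P(\abs{\mathcal N(0,1)}>u)\leq\exp(-u^2/2)$ together with a union bound over the $p$ columns gives $\mathbb P((1/n)\norm{\Phi^\top\epsilon}_\infty>r_0)\leq p\cdot p^{-A^2/2}=p^{1-A^2/2}$, which is exactly the complement of the event in the statement (and where $A>\sqrt2$ makes the bound nontrivial). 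From here on I would work on $\{(1/n)\norm{\Phi^\top\epsilon}_\infty\leq r_0\}$.

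Finally I would invoke Lemma \ref{lem:UDPimpliesSupportRecovery}. Because $r\geq r_1=2r_0/(1-2\kappa)>r_0/(1-2\kappa)$ its hypothesis holds, and it delivers
\[
\norm{\hat S-S}_\infty\leq\frac1{\theta_2}\Big[1+\frac{r_0}{r}+\frac{2n\theta_1\rho^2 s}{1-r_0/r-2\kappa}\Big]r,
\]
where the third-term numerator collapses to $2n\theta_1\rho^2 s=4es/(1-2e)^2$. The closing step, which I expect to be the fiddly part, is the algebraic simplification: the bracket is increasing in the ratio $r_0/r$ (its $r_0/r$-derivative is positive), and $r\geq r_1$ forces $r_0/r\leq r_0/r_1=(1-2\kappa)/2=(1-6e)/(2(1-2e))$; evaluating the monotone bracket at this worst value makes the denominator equal to $(1-6e)/(2(1-2e))$, and pulling the factor $r/r_1$ out front while using $r_1/\theta_2=2A\tilde\sigma\frac{1-2e}{1-6e}(\log p/n)^{1/2}$ turns the constant into precisely $A\tilde\sigma(\log p/n)^{1/2}\big[1+\frac{2(1-2e)}{1-6e}+\frac{16es}{(1-6e)^2}\big]$. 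The main obstacles are therefore bookkeeping: deriving the coherence bound $\theta_1=2de/n$ from the expansion property, and carrying the monotonicity-in-$r_0/r$ argument through the final simplification so that the clean $r/r_1$ scaling emerges.
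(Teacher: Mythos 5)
Your proposal is correct and matches the paper's own proof essentially line for line: the same UDP constants $\rho=1/((1-2e)\sqrt d)$, $\kappa=2e/(1-2e)$ from Lemma \ref{lem:ExpanderUDP}, the same coherence bound $\theta_1=2de/n$ derived from the expansion property, the same noise level $r_0=A\tilde\sigma d(\log p)^{1/2}/n^{3/2}$ controlled by Gaussian tails and a union bound, and the same invocation of Lemma \ref{lem:UDPimpliesSupportRecovery} with $\theta_2=d/n$. The only difference is cosmetic: you make explicit the monotonicity-in-$r_0/r$ step and the worst-case evaluation at $r_0/r_1=(1-6e)/(2(1-2e))$, which the paper carries out silently when passing to its stated bound.
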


\begin{proof}
Lemma \ref{lem:ExpanderUDP} shows that for all $\gamma\in\R^{p}$ and for all $T\subseteq\{1,\dotsc,p\}$ such that $\abs T\leq s$,
\begin{equation*}
\norm{\gamma_{T}}_{1}\leq\frac{\sqrt{s}}{ \sqrt d(1-2e)}\norm{\Phi\gamma}_{2}+\frac{2e}{1-2e}\norm\gamma_{1}\,.
\end{equation*}
Moreover, the expansion property implies:
\begin{equation*}
\max_{1\leq k\neq l\leq p}\frac1n\lvert\sum_{j=1}^{n}\Phi_{j,k}\Phi_{j,l}\lvert\leq\frac{2de}n\,.
\end{equation*}
Lemma \ref{lem:UDPimpliesSupportRecovery} with $1/\rho=(1-2e)\sqrt d$, $\kappa=2e/(1-2e)$, $\theta_{1}=2de/n$ and $\theta_{2}=d/n$, shows that for all regularizing parameter $r\geq2r_{0}({1-2e})/({1-6e})$,
\begin{equation*}
\lVert\hat S-S\lVert_{\infty}\leq\frac n{d}\Big[1+\frac{1-6e}{2(1-2e)}+\frac{8es}{(1-2e)(1-6e)}\Big]r\,,
\end{equation*}
on the event $\{(1/n)\lVert{\Phi^{\top}\epsilon}\lVert_{\infty}\leq r_{0}\}$. Finally, set $r_{0}=A\tilde\sigma {d(\log p)^{1/2}}/{n^{3/2}}$ and $Z_{i}=(1/n)\Phi^{\top}\epsilon$. Observe that $Z_{i}$ is centered Gaussian random variable with variance less than $\tilde\sigma^{2}\lVert\Phi_{i}\lVert_{2}^{4}/n^{3}$. Taking union bounds, it holds:
\begin{align*}
\mathbb P[(1/n)\lVert\Phi^{\top}\epsilon\lVert_{\infty}> r_{0}]&\leq\mathbb P[(1/n)\lVert\Phi^{\top}\epsilon\lVert_{\infty}> A\tilde\sigma {d(\log p)^{1/2}}/{n^{3/2}}]\,,\\
&\leq \sum_{i=1}^{p}\mathbb P[\lvert Z_{i}\lvert> A\tilde\sigma {d(\log p)^{1/2}}/{n^{3/2}}]\,,\\
&= \sum_{i=1}^{p}\mathbb P[\lvert Z_{i}\lvert> (\tilde\sigma\lVert\Phi_{i}\lVert_{2}^{2}/n^{3/2})\,A\sqrt{\log p}]\,,\\
&\leq p^{1-A^{2}/2}\,,
\end{align*}
using $\lVert\Phi_{i}\lVert_{2}^{2}=d$ and the fact that, for $A\sqrt{\log p}\geq\sqrt{2\log 2} $, we have:
\[
\mathbb P[\lvert\mathcal N(0,1)\lvert>A\sqrt{\log p}]\leq\frac1{\sqrt{\pi\log 2}}\exp(-c\log p)\leq p^{-A^{2}/2}\,.
\]
\end{proof}
\noindent
Note that, with high probability, a random bi-partite simple graph is a $(s,e)$-unbalanced expander. As a matter of fact, we have the following result using Chernoff bounds and Hoeffding's inequality, see \cite{xu2007further} for instance.

\begin{prop}\label{p:RandomExpander}
Consider $e>0$, $c>1$ and $p\geq 2s$. Then, with probability greater than $1-s\exp(-c\log p)$, a uniformly chosen bi-partite simple graph $G=(A,B,E)$ with $\abs A=p$, left degree $d$ such that:
\begin{equation}
\label{e:BoundsDegreeRandomGraph}
d\leq C_{1}(c,e)\log p\,,
\end{equation}
and number of right side vertices, namely $n=\abs B$, such that:
\begin{equation}
\label{e:BoundsNRandomGraph}
 n\geq C_{2}(c,e)\,s\log p\,,
\end{equation}
where $C_{1}(c,e), C_{2}(c,e), $ do not depend on $s$ but may depend on $e$, is a $(s,e)$-unbalanced expander graph.
\end{prop}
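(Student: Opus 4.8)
The plan is the classical first-moment / union-bound argument showing that a random left-regular bipartite graph expands with high probability; the two tools quoted in the statement (Chernoff and Hoeffding) enter precisely at the concentration step. First I would fix the model: each of the $p$ left vertices independently selects its $d$ neighbours in $B=\{1,\dots,n\}$, so that the $dk$ edges issuing from a fixed $I\subseteq A$ with $\#I=k$ are (up to the simple-graph constraint) uniform choices in $B$. For such a fixed $I$ I would bound the probability that expansion \eqref{eq:Expansion} fails, i.e. that $\#N(I)<(1-e)\,dk$. Revealing the $dk$ edges one at a time, a vertex of $N(I)$ is lost exactly when an edge collides with a previously hit vertex; conditionally on the past, the probability of a collision at each step is at most $dk/n$. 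Hence $\#N(I)<(1-e)dk$ forces at least $e\,dk$ collisions among $dk$ trials, and summing the conditional bounds over the choice of the colliding positions yields
\[
\mathbb P\big[\#N(I)<(1-e)dk\big]\leq \binom{dk}{e\,dk}\Big(\frac{dk}{n}\Big)^{e\,dk}\leq\Big(\frac{\mathrm e\,dk}{e\,n}\Big)^{e\,dk},
\]
with $\mathrm e=\exp(1)$, using $\binom{dk}{edk}\leq(\mathrm e/e)^{edk}$.

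Next I would union-bound over the $\binom pk\leq(\mathrm e\,p/k)^k$ subsets of size $k$ and sum over $k=1,\dots,s$:
\[
\mathbb P\big[\,G\text{ is not an }(s,e)\text{-expander}\,\big]\leq\sum_{k=1}^{s}\Big(\frac{\mathrm e\,p}{k}\Big)^{k}\Big(\frac{\mathrm e\,dk}{e\,n}\Big)^{e\,dk}.
\]
The whole point is to calibrate the constants so that each summand is at most $\exp(-c\log p)$. Using $k\leq s$ with \eqref{e:BoundsDegreeRandomGraph}--\eqref{e:BoundsNRandomGraph} one gets $dk/n\leq d/(C_2\log p)\leq C_1/C_2$, made small by taking $C_2$ large; then $(\mathrm e\,dk/(e\,n))^{e\,dk}$ is of the form $(\text{small})^{\Theta(\log p)}$, i.e.\ a negative power $p^{-\beta}$ of $p$ with $\beta$ as large as desired, while the entropy factor $(\mathrm e\,p/k)^k$ is only $p^{O(1)}$. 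Taking logarithms and the $k$-th root reduces matters to the scalar inequality $\log(\mathrm e\,p/k)+e\,d\,\log(\mathrm e\,dk/(e\,n))\leq -c\log p$, which I would discharge by choosing $C_1(c,e)$ and $C_2(c,e)$ (both independent of $s$); summing $s$ such terms then produces the stated bound $s\exp(-c\log p)$.

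The main obstacle is exactly this constant-bookkeeping in the final step: one must check that the loss exponent $e\,d$, which is of order $\log p$ through \eqref{e:BoundsDegreeRandomGraph}, genuinely dominates the combinatorial entropy $k\log(\mathrm e\,p/k)\leq s\log p$ \emph{uniformly} in $1\leq k\leq s$, and that the smallness of $dk/n$ furnished by \eqref{e:BoundsNRandomGraph} suffices to absorb the $\binom{dk}{edk}$ factor. This is where the precise scaling $d\asymp\log p$ and $n\asymp s\log p$ is essential, and where $C_1,C_2$ must be pinned down as explicit functions of $c$ and $e$ alone.
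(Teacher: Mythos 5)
Your argument is correct, and the constant bookkeeping you defer does close: taking $d$ of order $\log p/e$ and $C_{2}/C_{1}$ a large enough function of $c$ and $e$, each summand $(\mathrm{e}p/k)^{k}\big(\mathrm{e}dk/(e\,n)\big)^{e\,dk}$ is at most $p^{-ck}$, and summing over $k\leq s$ gives the stated $s\exp(-c\log p)$. However, this is not the paper's route, for the simple reason that the paper proves nothing here: it defers entirely to \cite{xu2007further}, indicating only that the result follows from Chernoff bounds and Hoeffding's inequality. The closest argument actually written out in the paper is the Bernoulli analogue, Lemma \ref{lem:BernouExpander}, where the neighbourhood size $\#\mathrm{Supp}(\Phi 1\!\!1_{U})$ is decomposed as a sum of $n$ independent row indicators (each row meets $U$ with probability $\nu=1-(1-\mu)^{s}$), and a Chernoff/Kullback--Leibler lower-tail bound plus a union bound over the $\exp(s\log p)$ supports concludes. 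You bound the per-subset failure probability by a different mechanism -- sequential edge revelation and collision counting, giving $\binom{dk}{e\,dk}(dk/n)^{e\,dk}$ -- and then run the same union-bound skeleton over subsets and cardinalities $k\leq s$. What each buys: your argument is elementary and self-contained (no concentration inequality, and it is the natural argument for the left-$d$-regular uniform model; the with/without-replacement issue you gloss over only changes the collision probability to $dk/(n-d)\leq 2dk/n$, harmless for constants), whereas the Chernoff route transfers verbatim to the non-regular Bernoulli design the paper actually uses in Appendix \ref{App:BerDesigns}.

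One point you flag should be stated even more bluntly, since it concerns the statement itself: your calibration needs a \emph{lower} bound $e\,d\gtrsim(2+c)\log p$, while \eqref{e:BoundsDegreeRandomGraph} literally only bounds $d$ from above. Some lower bound is unavoidable: for instance with $d=1$ and $n$ comparable to $C_{2}s\log p$, two left vertices share their unique neighbour with high probability, so sets of size $2$ already fail to $(1-e)$-expand for $e<1/2$. The proposition must therefore be read as asserting that a degree $d$ of order $\log p$ works (the upper bound ensuring both column sparsity and the smallness of $dk/n$, the implicit lower bound making the exponent $e\,dk$ dominate the entropy term), and this is exactly the reading your proof instantiates with $d\asymp\log p$, $n\asymp s\log p$.
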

\noindent 
Hence we deduce the following corollary of Theorem \ref{t:Expander}.

\begin{cor}\label{cor:Expander}
Consider $c>1$, $p\geq 4s$ and choose $e=1/12$. Let $\Phi\in\{0,1\}^{n\times p}$ be drawn uniformly according to Proposition \ref{p:RandomExpander} so that $d\leq C_{1}\log p$ and:
\begin{equation}\label{eq:MeasurementExpanders}
n\geq n_{0}:=C_{2}\,s\log p\,,
\end{equation}
with $C_{1},C_{2}$ constants that depend only on $c$. Let $A> [\min({C_{1}}, 2)]^{1/2}$. Let $\hat S$ be any solution to \eqref{e:defSest} with regularizing parameter:
\[
r\geq r_{1}:=3.34A\tilde\sigma\Big[\frac{\log p}{n}\Big]^{3/2}\,,
\]
Then, with probability greater than $1-p^{1-A^{2}/2}-2s\exp(-c\log p)$, it holds:
\begin{equation}\label{e:tresholdRandomSparseGraphs}
\lVert\hat S-S\lVert_{\infty}\leq {51.7\,A\,C_{2}^{-1/2}\,\tilde\sigma}\,\Big[\frac r{r_{1}}\Big]\,\Big[\frac{n_{0}}{n}\Big]^{1/2}\sqrt s\,.
\end{equation}
\end{cor}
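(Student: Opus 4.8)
The final statement is Corollary \ref{cor:Expander}, which specializes Theorem \ref{t:Expander} to the case of a random bipartite graph with expansion constant $e=1/12$.

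Let me think about how to prove this.

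The plan is to combine Proposition \ref{p:RandomExpander} (random graph is an expander with high probability) with Theorem \ref{t:Expander} (deterministic error bound for expander designs), instantiating $e = 1/12$.

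First I would invoke Proposition \ref{p:RandomExpander} with $e = 1/12$ to get that, with probability at least $1 - s\exp(-c\log p)$ (actually the corollary claims $2s\exp(-c\log p)$, so I'd need to track where the factor 2 comes from — possibly from applying the proposition to a $(2s, e)$-expander rather than $(s,e)$), the matrix $\Phi$ is a $(2s, 1/12)$-unbalanced expander with $d \leq C_1 \log p$ and $n \geq C_2 s \log p$.

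Then I'd substitute $e = 1/12$ into the conclusion of Theorem \ref{t:Expander}. The key is to compute the constants: with $e = 1/12$, I'd have $1 - 2e = 5/6$, $1 - 6e = 1/2$, and $(1-6e)^2 = 1/4$. The bracketed factor $[1 + 2(1-2e)/(1-6e) + 16es/(1-6e)^2]$ becomes $[1 + 2(5/6)/(1/2) + 16(1/12)s/(1/4)] = [1 + 10/3 + (16/3)s]$. The dominant term for large $s$ is $(16/3)s$.

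The next step is to relate the threshold bound to the stated form in \eqref{e:tresholdRandomSparseGraphs}. From Theorem \ref{t:Expander}, the bound is $A\tilde\sigma[\log p/n]^{1/2} \cdot (\text{bracket}) \cdot (r/r_1)$, and I'd use $d \leq C_1 \log p$ together with the definition $r_1 = 2A\tilde\sigma[(1-2e)/(1-6e)][d(\log p)^{1/2}/n^{3/2}]$ to express $r_1$ in terms of $\log p / n$; with $e = 1/12$ the ratio $(1-2e)/(1-6e) = (5/6)/(1/2) = 5/3$, and bounding $d \leq C_1 \log p$ gives the $3.34 A\tilde\sigma[\log p/n]^{3/2}$ form (noting $2 \cdot 5/3 \cdot C_1 / C_2^{?}$ type bookkeeping, and $51.7$ is the numerical constant absorbing $16/3$ times the conversion through $C_2^{-1/2}$ with $n_0 = C_2 s\log p$). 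I would substitute $n_0 = C_2 s\log p$ to pull a factor $\sqrt{n_0/n}$ and a $\sqrt s$ out of the bound, matching the clean form $51.7\,A\,C_2^{-1/2}\,\tilde\sigma\,[r/r_1]\,[n_0/n]^{1/2}\sqrt s$.

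The main obstacle is purely the constant bookkeeping: tracking how $d \leq C_1\log p$ and $n \geq n_0 = C_2 s\log p$ propagate through the $n/d$ prefactor and the $r_1$ normalization in Theorem \ref{t:Expander}, and verifying that the numerical constants $3.34$ and $51.7$ come out correctly after setting $e = 1/12$. The condition $A > [\min(C_1, 2)]^{1/2}$ should ensure both the Gaussian tail bound $p^{1-A^2/2}$ from Theorem \ref{t:Expander} and the expander failure probability combine into the stated $1 - p^{1-A^2/2} - 2s\exp(-c\log p)$, and I would confirm the probabilities add by a union bound over the two bad events (failure of expansion, and failure of the noise control $\{(1/n)\lVert\Phi^\top\epsilon\rVert_\infty \leq r_0\}$).
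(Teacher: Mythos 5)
Your proposal follows exactly the paper's own (implicit) derivation: the paper gives no written proof of Corollary \ref{cor:Expander} beyond the phrase ``Hence we deduce the following corollary of Theorem \ref{t:Expander}'', and the intended argument is precisely your combination of Proposition \ref{p:RandomExpander} --- applied at sparsity $2s$, which correctly explains both the hypothesis $p\geq 4s$ and the failure probability $2s\exp(-c\log p)$ --- with Theorem \ref{t:Expander} at $e=1/12$, followed by a union bound over the two bad events and the constant bookkeeping you outline. Your numerical instantiation ($1-2e=5/6$, $1-6e=1/2$, bracket $=1+\frac{10}{3}+\frac{16}{3}s$, and $2(1-2e)/(1-6e)=\frac{10}{3}\approx 3.34$ in the definition of $r_1$) matches what that substitution produces, so the approach is essentially identical to the paper's.
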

\begin{rem} Observe that \eqref{e:tresholdRandomSparseGraphs} is also consistent with the regime $n=\mathcal O(s^{2}\log p)$. In this case, we uncover that $\lVert\hat S-S\lVert_{\infty}\leq (\mathrm{cst})\,\tilde\sigma$. Namely, the thresholded lasso faithfully recovers the support of entries whose magnitudes are above the noise level.
\end{rem}


\subsection{Bernoulli designs}
\label{App:BerDesigns}
We can relax the hypothesis on the left-regularity using a Bernoulli design that mimics the uniform probability on $d$-regular graphs. This model is particularly interesting since one can easily generate a design matrix $\Phi$. 

Recall we consider a Bernoulli distribution with parameter $\mu \in (0,1)$ and $(\Phi_{ji})_{j,i}$ are independently drawn with respect to this distribution, with for all  $i,j$, it holds $\mathbb P(\Phi_{ji} = 1)=\mu=1-\mathbb P(\Phi_{ji}=0)$. We begin with some preliminaries lemmas.

\begin{lem}\label{lem:BernouDegree}
Let $p,n>0$. Let $c>1$. Let $\Phi\in\{0,1\}^{n\times p}$ a Bernoulli matrix drawn according to \eqref{e:loibern} with:
\[
\mu=799(1+c)\frac{\log p}n\,.
\]
If $n\geq799(c+1){\log p}$ then $\Phi$ satisfies for all $i\in\{1,\ldots,p\}$,
\begin{equation}\label{eq:DegreeBernou}
759(1+c)\log p\leq\lVert\Phi_{i}\lVert_{0}=\lVert\Phi_{i}\lVert_{2}^{2}\,\leq828(1+c)\log p\,,
\end{equation}
and
\[
\max_{1\leq k\neq l\leq p}\frac1n\lvert\sum_{j=1}^{n}\Phi_{j,k}\Phi_{j,l}\lvert\ \leq 879(1+c)\frac{\log p}n\,,\]
with a probability greater than $1-(1+2p)p^{-c}$.
\end{lem}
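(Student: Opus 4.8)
The plan is to prove the three claims of Lemma \ref{lem:BernouDegree} separately, each via a concentration inequality followed by a union bound, exactly as was done in the Bernoulli proof of Theorem \ref{t:linfini} (Step 2 of Appendix \ref{proof:TheoBer}). The guiding principle is that each column norm $\lVert\Phi_{i}\lVert_{2}^{2}=\sum_{j=1}^{n}\Phi_{ji}^{2}=\sum_{j=1}^{n}\Phi_{ji}$ is a sum of $n$ i.i.d. Bernoulli$(\mu)$ variables with mean $n\mu=799(1+c)\log p$, and each off-diagonal inner product $\sum_{j}\Phi_{j,k}\Phi_{j,l}$ (for $k\neq l$) is a sum of $n$ i.i.d. Bernoulli$(\mu^{2})$ variables with mean $n\mu^{2}$. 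Since $\mu=799(1+c)\log p/n$ and $n\geq 799(c+1)\log p$, we have $\mu\leq 1$ as required, and $n\mu^{2}=\mu\cdot n\mu=799(1+c)\log p\cdot\mu$, a quantity I will control below.

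First I would handle the column-norm bound \eqref{eq:DegreeBernou}. Writing $\lVert\Phi_{i}\lVert_{2}^{2}$ as a sum of $n$ bounded $[0,1]$-valued i.i.d. terms, Hoeffding's inequality \cite{Hoeffding1963} gives
\[
\mathbb P\Big(\big\lvert\lVert\Phi_{i}\lVert_{2}^{2}-n\mu\big\rvert\geq u\Big)\leq 2\exp\!\big(-2u^{2}/n\big)\,.
\]
Taking $u$ to be a suitable multiple of $\log p$ and invoking $n\mu=799(1+c)\log p$, one checks that the deviation $u$ needed to reach the two-sided window $[759(1+c)\log p,\,828(1+c)\log p]$ amounts to $u=40(1+c)\log p$ on the lower side and $u=29(1+c)\log p$ on the upper side; the tighter (upper) constraint $u=29(1+c)\log p$ together with $n\geq 799(c+1)\log p$ makes the exponent $2u^{2}/n\geq c\log p$ after a routine numerical check, yielding a per-column failure probability at most $2p^{-c}$. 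A union bound over the $p$ columns costs a factor $p$, giving failure probability at most $2p\cdot p^{-c}=2p^{1-c}$ for this event.

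Next I would treat the coherence bound on the off-diagonal inner products. For fixed $k\neq l$, the variable $\sum_{j}\Phi_{j,k}\Phi_{j,l}$ is again a sum of $n$ i.i.d. $[0,1]$-valued terms, now with mean $n\mu^{2}$, so Hoeffding gives an identical-form tail. The target threshold is $\frac1n\sum_{j}\Phi_{j,k}\Phi_{j,l}\leq 879(1+c)\log p/n$, i.e. the sum itself is at most $879(1+c)\log p$. Since the mean $n\mu^{2}=799(1+c)\log p\cdot\mu\leq 799(1+c)\log p$ (as $\mu\leq 1$), the required deviation above the mean is at least $80(1+c)\log p$, and the Hoeffding exponent $2\cdot(80(1+c)\log p)^{2}/n$ is bounded below by $c\log p$ using $n\geq 799(c+1)\log p$, so each pair fails with probability at most $p^{-c}$. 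A union bound over the at most $p(p-1)/2<p^{2}$ unordered pairs multiplies this by $p^{2}$, contributing failure probability at most $p^{2-c}$. Combining the column event and the coherence event by a final union bound yields total failure probability at most $2p^{1-c}+p^{2-c}\leq(1+2p)p^{-c}$, as claimed (here I absorb $p^{2-c}=p\cdot p^{1-c}$ into the stated bound). The main obstacle is not conceptual but arithmetic: the precise numerical constants $759,\,828,\,879$ are engineered so that the Hoeffding exponents clear $c\log p$ under the single hypothesis $n\geq 799(c+1)\log p$, so the delicate step is verifying that each chosen deviation $u$ satisfies $2u^{2}/n\geq c\log p$ simultaneously; I would carry out these three inequalities explicitly and confirm the looser hypothesis on $n$ suffices for all of them.
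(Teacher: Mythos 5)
There is a genuine gap, and it is in the choice of concentration tool, not in the arithmetic. You invoke the additive form of Hoeffding's inequality, whose exponent is $2u^{2}/n$ for a deviation $u$ of order $(1+c)\log p$. Since $\mu=799(1+c)\log p/n$, both the mean $n\mu$ and the deviations you need ($u=29(1+c)\log p$, $40(1+c)\log p$, $80(1+c)\log p$) are \emph{fixed} multiples of $\log p$, independent of $n$; hence the exponent $2u^{2}/n$ is a decreasing function of $n$. Your claim that ``$n\geq 799(c+1)\log p$ makes the exponent $2u^{2}/n\geq c\log p$'' is backwards: a lower bound on $n$ can only make $2u^{2}/n$ smaller. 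The inequality $2u^{2}/n\geq c\log p$ would require an \emph{upper} bound $n\lesssim (1+c)^{2}(\log p)$, which the lemma does not assume, and which is violated in every intended application (the lemma is used with $n\gtrsim s\log p$, so $\mu\sim 1/s\ll1$). In that sparse regime your Hoeffding bound degenerates to something close to $1$ and the union bounds over $p$ columns and $p^{2}$ pairs give nothing. The same failure hits all three of your estimates.

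The paper avoids this by using the multiplicative (Kullback--Leibler) form of the Chernoff bound: for $Y_{i}=\sum_{j}\Phi_{ji}$,
\[
\mathbb P\big(Y_{i}\geq n(\mu+\delta)\big)\leq\exp\big(-n\,\mathbf H(\mu+\delta\|\mu)\big),
\qquad
\mathbf H(\mu+\delta\|\mu)\geq\frac{\delta^{2}}{2(\mu+\delta)}\,,
\]
and takes $\delta$ proportional to $\mu$ (e.g.\ the window $[0.95\,n\mu,\,1.036\,n\mu]$ matching the constants $759$ and $828$). The resulting exponent is then proportional to $n\mu=799(1+c)\log p$, which is independent of $n$ and beats the $\log p$ (resp.\ $2\log p$) cost of the union bounds for every $n$ satisfying the hypothesis; the off-diagonal bound $879(1+c)\log p/n\approx1.1\mu$ is handled the same way. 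If you want to salvage your write-up, replace each application of additive Hoeffding by this relative-entropy Chernoff bound (or by Bernstein's inequality, whose exponent $u^{2}/(2n\mu+2u/3)$ also scales like $n\mu$ rather than $u^{2}/n$); the structure of your argument--three events, union bounds, then combination into $1-(1+2p)p^{-c}$--can then be kept as is.
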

\begin{proof}
Let $i\in\{1,\ldots,p\}$ and consider $Y_{i}=\Phi_{1,i}+\ldots+\Phi_{n,i}$. Observe that Chernoff bound reads:
\[
\mathbb P(Y_{i}\geq n(\mu+\delta))\leq \exp(-n\,\mathbf H(\mu+\delta\|\mu))
\]
where $\mathbf H(a\|b)$ denotes the Kullback-Leibler divergence between two Bernoulli random variables with parameter $a$ and $b$, namely:
\[
\mathbf H(a\|b)=a\log(a/b)+(1-a)\log((1-a)/(1-b))\,.
\]
Observe that the second derivative of $x\mapsto\mathbf H(\mu+x\|\mu)$ is equal to $1/((\mu+x)(1-\mu-x))$ and is bounded from below by $1/(\mu+\delta)$ on $[\mu,\mu+\delta]$. Therefore, 
\begin{equation}\label{eq:KLMinoration}
\mathbf H(\mu+\delta\|\mu)\geq \frac{\delta^{2}}{2(\mu+\delta)}\,.
\end{equation}
Using union bound, we get that:
\[
\mathbb P[\forall i\,,\ Y_{i}\leq 1.036009\,n\mu]\geq 1-\exp[\log p-0.001252\, n\mu]\geq 1-p^{-({c-1})}\,,
\]
as desired. Similarly, one get that:
\begin{equation}\label{eq:KLMajoration}
\mathbf H(\mu-\delta\|\mu)\leq \frac{\delta^{2}}{2\mu}\,,
\end{equation}
and so:
\[
\mathbb P[\forall i\,,\ Y_{i}\geq 0.05004\, n\mu]\geq 1-\exp[\log p-0.001252\, n\mu]\geq 1-p^{-({c-1})}\,.
\]
The second inequality follows from the same analysis:
\[
\mathbb P[\forall k\neq l\,,\ \frac1n\sum_{j=1}^{n}\Phi_{j,k}\Phi_{j,l}\geq \mu^{2}+0.1\mu)]\leq \exp[\log[\frac{p(p-1)}2]-\frac{n}{200(1+1/(10\mu))}]\,.
\]
Observe that $\log({p(p-1)}/2)\leq2\log p$, $1+1/(10\mu)\leq 1.01/\mu$
and $\mu^{2}+0.1\mu\leq 1.01\mu$. Therefore,
\[
\mathbb P[\forall k\neq l\,,\ \frac1n\sum_{j=1}^{n}\Phi_{j,k}\Phi_{j,l}\leq 1.1\mu)]\geq 1-\exp(-1.5(1+c)\log p)\,.
\]
\end{proof}

\begin{lem}\label{lem:BernouExpander}
Let $p,s>0$ and $c>1$. Let $\Phi\in\{0,1\}^{n\times p}$ a Bernoulli matrix drawn according to \eqref{e:loibern} with $\mu=799(1+c)\log p/n$ and: 
\[
n\geq 6491(1+c)s\log p\,.
\]
Then, with a probability greater than $1-p^{-cs}$, the matrix $\Phi$ satisfies the following vertex expansion property:
\begin{equation}\label{eq:EdgeExpansionBernoulli}
\#\{ \mathrm{Supp}(\Phi1\!\!1_{U})\}\geq (6/7)\, \mathrm d_{max}\#{U} \,,
\end{equation}
where $U$ is a subset of $\{1,\ldots,p\}$ of size $s$, $1\!\!1_{U}\in\R^{p}$ denotes the vector with entry $1$ on $U$ and $0$ elsewhere, and $\mathrm d_{max}=828(1+c)\log p$ is the maximal support size of one column of $\Phi$ as shown in \eqref{eq:DegreeBernou}.
\end{lem}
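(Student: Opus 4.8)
The plan is to prove the vertex expansion property \eqref{eq:EdgeExpansionBernoulli} by a union bound over all subsets $U$ of size $s$, combined with a concentration estimate controlling the support of $\Phi 1\!\!1_U$ for a fixed $U$. Fix a subset $U\subseteq\{1,\ldots,p\}$ with $\#U=s$. The quantity $\#\{\mathrm{Supp}(\Phi 1\!\!1_U)\}$ is the number of rows $j\in\{1,\ldots,n\}$ for which at least one of the entries $\Phi_{j,i}$, $i\in U$, equals $1$. Since the entries of $\Phi$ are i.i.d.\ Bernoulli$(\mu)$, row $j$ lies outside the support precisely when all $s$ of the entries $(\Phi_{j,i})_{i\in U}$ vanish, which happens with probability $(1-\mu)^{s}$. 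Hence the number of rows \emph{not} in the support is a Binomial$(n,(1-\mu)^{s})$ random variable, and $\#\{\mathrm{Supp}(\Phi 1\!\!1_U)\}=n-\mathrm{Bin}(n,(1-\mu)^{s})$.

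First I would establish the target inequality in expectation. The expected support size is $n(1-(1-\mu)^{s})$, and I want this to comfortably exceed $(6/7)\mathrm d_{max}s=(6/7)\cdot828(1+c)s\log p$. Using $\mu=799(1+c)\log p/n$ together with the hypothesis $n\geq6491(1+c)s\log p$, one controls $(1-\mu)^{s}$ from above (for instance via $(1-\mu)^{s}\leq e^{-s\mu}$ combined with a lower bound on $1-e^{-s\mu}$, or a direct second-order expansion since $s\mu$ is bounded by a small constant under these parameter choices). This pins down a numerical margin showing that the expected support exceeds $(6/7)\mathrm d_{max}s$ by a constant factor, which is exactly what leaves room for a Chernoff deviation.

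Next I would apply a Chernoff/Hoeffding bound to the Binomial count, in the same spirit as the proof of Lemma \ref{lem:BernouDegree}: since $\#\{\mathrm{Supp}(\Phi 1\!\!1_U)\}$ is a sum of independent (across rows) Bernoulli indicators, its lower-tail deviation below a fraction of its mean is exponentially small, of the form $\exp(-c'n)$ for an explicit constant $c'$ depending on the chosen margin. I would then take the union bound over all $\binom{p}{s}$ subsets $U$ of size $s$, using $\binom{p}{s}\leq p^{s}=\exp(s\log p)$. The failure probability is thus bounded by $\exp(s\log p-c'n)$, and substituting $n\geq6491(1+c)s\log p$ must force this to be at most $p^{-cs}$, matching the statement. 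The calibration of the constant $6491$ against $799$ and $828$ is precisely what makes the exponent $c'n-s\log p$ dominate $cs\log p$.

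The main obstacle will be the bookkeeping of explicit constants: one must verify that with $\mu=799(1+c)\log p/n$ and $n\geq6491(1+c)s\log p$, the deviation exponent $c'n$ genuinely swamps the entropy term $s\log p$ arising from the union bound, while simultaneously the expected support stays above $(6/7)\mathrm d_{max}s$. Both constraints are tight numerically and require careful estimation of $(1-\mu)^{s}$ and of the Kullback-Leibler divergence governing the Chernoff bound (exactly as in \eqref{eq:KLMinoration} of Lemma \ref{lem:BernouDegree}). Once these numerical inequalities are checked, the union bound closes the argument and delivers the stated probability $1-p^{-cs}$.
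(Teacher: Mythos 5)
Your proof skeleton is the same as the paper's: for fixed $U$ the support size $N_{n}$ is Binomial$(n,\nu)$ with $\nu=1-(1-\mu)^{s}$, one checks that its mean clears $(6/7)\,\mathrm d_{max}\,s$ with some margin, applies a lower-tail Chernoff bound, and takes a union bound over the $\exp(s\log p)$ subsets. However, the one quantitative claim you make about the concentration step is false, and it sits exactly where this lemma lives or dies. Under the coupling $\mu=799(1+c)\log p/n$, the Binomial mean satisfies $n\nu\leq ns\mu=799(1+c)s\log p$: it does \emph{not} grow with $n$. Consequently no lower-tail bound of the form $\exp(-c'n)$ can hold, and your closing logic (``the failure probability is $\exp(s\log p-c'n)$, and substituting $n\geq6491(1+c)s\log p$ forces it below $p^{-cs}$'') is not the mechanism at work. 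Any Chernoff-type lower-tail bound has exponent proportional to the \emph{mean} $n\nu$, a quantity independent of $n$; enlarging $n$ buys nothing. The paper's proof closes only because of a razor-thin numerical margin: with $s\mu\leq799/6491\leq0.1231$ one may take $\delta\geq0.0501\,s\mu$ in
\[
\mathbb P\big(N_{n}\leq n(\nu-\delta)\big)\leq\exp\big(-n\delta^{2}/(2\nu)\big)\leq\exp\big(-0.001255\,ns\mu\big)\leq\exp\big(-(1+c)s\log p\big)\,,
\]
the last step holding because $(0.0501)^{2}/2\times799\approx1.0027>1$, while still keeping $n(\nu-\delta)\geq0.8883\,ns\mu\geq(6/7)\,\mathrm d_{max}\,s$. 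The role of the constant $6491$ is solely to make $s\mu$ small enough for these quadratic approximations; it is not the source of the exponential decay.

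The distinction is not cosmetic: if you instantiate your ``Chernoff/Hoeffding bound'' with the additive Hoeffding inequality, the proof fails for \emph{every} admissible $n$. The per-row deviation you need is about $0.05\,s\mu$, so Hoeffding's exponent is of order $n(s\mu)^{2}=(ns\mu)(s\mu)\leq799(1+c)s\log p\times0.1231\times(\mathrm{cst})$, which falls short of the required $(1+c)s\log p$ at the minimal $n$ and, since $s\mu\propto1/n$, only deteriorates as $n$ grows. You must use the multiplicative (Kullback--Leibler) form of Chernoff's bound, whose exponent $n\delta^{2}/(2\nu)$ carries the mean in the denominator --- and note that the relevant estimate is the lower-tail one \eqref{eq:KLMajoration}, not \eqref{eq:KLMinoration} (which handles the upper tail in Lemma \ref{lem:BernouDegree}). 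Once you switch to that bound and carry out the numerical calibration above, your argument becomes exactly the paper's.
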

\begin{proof}
The number of subsets of size $s$ can be upper bounded by $\exp(s\log p)$. Observe that the left hand side of \eqref{eq:EdgeExpansionBernoulli} is a random variable $N_{n}$ with the same law as:
\[
N_{n}\mathop=^{d}\sum_{i=1}^{n}Z_{i}\quad\mathrm{where}\ Z_{i}\mathop\sim^{i.i.d}\mathcal B(\nu)\,,
\]
where the Bernoulli parameter $\nu=1-(1-\mu)^{s}$. Using \eqref{eq:KLMajoration}, we get that:
\[
\mathbb P(N_{n}\leq n(\nu-\delta))\leq\exp(-n\delta^{2}/(2\nu))\,.
\]
Set $\delta:=1-0.8883 s\mu-\exp(-s\mu)$ and observe that it holds $s\mu\leq 0.1231$, $\nu\geq 1-\exp(-s\mu)$, and $\delta\geq s\mu(0.1117-0.5s\mu)\geq0.0501s\mu$. We deduce that:
\[
\mathbb P(N_{n}\leq 0.8883 ns\mu)\leq\exp(-0.001255ns\mu)\leq\exp(-(c+1)s\log p)\,,
\]
using $\delta\geq0.0501s\mu$ and $\nu\leq s\mu$.
\end{proof}
\noindent

\begin{lem}\label{lem:BernouUDP}
Let $p>7$, $s>0$ and $c>1$. Let $\Phi\in\{0,1\}^{n\times p}$ a Bernoulli matrix drawn according to \eqref{e:loibern} with:
\begin{itemize}
\item $\mu=799(1+c)\frac{\log p}n$,
\item $n\geq n_{0}:=12982(1+c)s\log p$.
\end{itemize}
Then, with a probability greater than $1-(1+2p+(1-p^{-c})^{-1})p^{-c}$, the matrix $\Phi$ satisfies for all $\gamma\in\R^{p}$ and for all $T\subseteq\{1,\dotsc,p\}$ such that $\abs T\leq s$,
\begin{equation*}
\norm{\gamma_{T}}_{1}\leq0.0551\Big[\frac{{s}}{{(1+c)\log p}}\Big]^{1/2}\norm{\Phi\gamma}_{2}+0.4529\norm\gamma_{1}\,.
\end{equation*}
\end{lem}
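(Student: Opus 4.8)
The plan is to establish a UDP-type inequality of the form \eqref{eq:UDP} for the Bernoulli design by comparing it to the expander-graph analysis already carried out in Lemma \ref{lem:ExpanderUDP}. The chosen parameters ($\mu=799(1+c)\log p/n$ and $n\geq12982(1+c)s\log p$) are engineered so that two deterministic events hold simultaneously with high probability: first, the degree/coherence control of Lemma \ref{lem:BernouDegree}, which bounds the column supports $\lVert\Phi_i\lVert_2^2$ between $759(1+c)\log p$ and $828(1+c)\log p$ and bounds the off-diagonal correlations; and second, the vertex-expansion property \eqref{eq:EdgeExpansionBernoulli} of Lemma \ref{lem:BernouExpander}, which asserts that any support of size $s$ expands by a factor $6/7$ relative to $\mathrm{d}_{max}$. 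The first step is therefore to union-bound the two failure probabilities. Lemma \ref{lem:BernouDegree} fails with probability at most $(1+2p)p^{-c}$, and Lemma \ref{lem:BernouExpander} fails with probability at most $p^{-cs}\leq(1-p^{-c})^{-1}p^{-c}$ (a crude but valid majorization since $s\geq1$), which reproduces exactly the stated probability $1-(1+2p+(1-p^{-c})^{-1})p^{-c}$. I would note that the larger sample-size requirement $n_0=12982(1+c)s\log p$ dominates the two hypotheses $n\geq799(c+1)\log p$ and $n\geq6491(1+c)s\log p$ needed by the two preceding lemmas, so both apply.

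The heart of the argument, on this good event, is to rerun the expander UDP proof of Lemma \ref{lem:ExpanderUDP} with the expansion constant read off from \eqref{eq:EdgeExpansionBernoulli}. Since the Bernoulli matrix expands a size-$s$ set by $(6/7)\,\mathrm{d}_{max}$ and each column has support at least $759(1+c)\log p$, one can extract an effective expansion constant $e$ and an effective left degree $d$ from the chain of inequalities \eqref{eq:RIP1} used in Lemma \ref{lem:ExpanderUDP}. Concretely, I would identify $e$ with (roughly) $1/7$, coming from the $6/7$ expansion factor, and use the lower bound on the column norms as the replacement for the constant degree $d$. The combinatorial counting step in Lemma \ref{lem:ExpanderUDP}, namely that at most $\mathrm{d}_{max}\,e\,2s$ edges cross from each $T_l$ into $N(T)$, goes through verbatim once expansion is available for sets of size up to $2s$; I would invoke it for the relevant $(2s,e)$ expansion, which the chosen $n_0$ guarantees.

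The main obstacle I anticipate is bookkeeping the constants so that the final coefficients $0.0551$ and $0.4529$ emerge exactly. Because the Bernoulli design is \emph{not} exactly left-regular, the quantity $\sqrt{sd}$ appearing in the Cauchy--Schwarz step of Lemma \ref{lem:ExpanderUDP} must be replaced by a ratio between the minimal and maximal column supports, $759(1+c)\log p$ versus $828(1+c)\log p$, and this degree fluctuation inflates the constants relative to the regular case. I would carefully track the factor $\sqrt{\mathrm{d}_{max}/\mathrm{d}_{min}}$ (equivalently $\sqrt{828/759}$) through the Cauchy--Schwarz and expansion inequalities, combine it with $e$ near $1/7$ to produce $\kappa=2e/(1-2e)\approx0.4529$, and simplify the prefactor $\sqrt{s}/((1-2e)\sqrt{d})$ using $d\gtrsim759(1+c)\log p$ to obtain the numerical constant $0.0551$. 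The requirement $p>7$ enters here to keep $\log p$ bounded away from zero and to validate the rounding of these constants. Apart from this constant-chasing, the structure is a direct transcription of the expander proof, so I do not expect conceptual difficulty beyond the irregularity correction.
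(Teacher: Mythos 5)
Your overall route is the same as the paper's: combine the degree and coherence bounds of Lemma \ref{lem:BernouDegree} with the expansion property of Lemma \ref{lem:BernouExpander}, then rerun the proof of Lemma \ref{lem:ExpanderUDP} allowing irregular left degrees, replacing the constant degree $d$ by the pair $\mathrm d_{min}=759(1+c)\log p$, $\mathrm d_{max}=828(1+c)\log p$. The paper's constants are exactly the degree-ratio-corrected quantities you describe: $\rho\sqrt s=\sqrt s\,\sqrt{\mathrm d_{max}}\big/\big(\mathrm d_{min}(1-2(\mathrm d_{max}/\mathrm d_{min})e)\big)\approx0.0551\sqrt{s/((1+c)\log p)}$ and $\kappa=2\mathrm d_{max}e\big/\big(\mathrm d_{min}(1-2(\mathrm d_{max}/\mathrm d_{min})e)\big)\approx0.4529$ with $e=1/7$. (Note in passing that your displayed identity $\kappa=2e/(1-2e)\approx0.4529$ is wrong as written: $2e/(1-2e)=0.4$ at $e=1/7$; the missing $0.05$ comes precisely from the ratio $\mathrm d_{max}/\mathrm d_{min}=12/11$ that you elsewhere promise to track, and it enters $\kappa$ as a plain ratio, not a square root.)

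There is, however, one genuine gap: the probability accounting. You treat Lemma \ref{lem:BernouExpander} as a single event with failure probability $p^{-cs}$ and then ``crudely majorize'' $p^{-cs}\leq(1-p^{-c})^{-1}p^{-c}$. But that lemma controls expansion only for sets of one fixed size, whereas the expander-UDP argument (your own second paragraph) requires the full $(2s,1/7)$-expander property of Definition \ref{d:Expander}, i.e.\ expansion of \emph{every} set of size $k$ for \emph{every} $k\leq 2s$: the sets $T\cup T_l$ in Lemma \ref{lem:ExpanderUDP} have sizes ranging over $[s,2s]$, and \eqref{eq:RIP1} uses sizes at most $s$. Expansion at a single size does not imply expansion at smaller sizes with the same constant (padding a small set up to size $2s$ degrades the expansion constant, which would destroy the constants $0.0551$ and $0.4529$). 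The correct accounting is to apply Lemma \ref{lem:BernouExpander} at each size $k=1,\dotsc,2s$ --- each application is legitimate since $n\geq12982(1+c)s\log p\geq6491(1+c)k\log p$ --- and union bound; it is exactly the geometric series $\sum_{k\geq1}p^{-ck}\leq(1-p^{-c})^{-1}p^{-c}$ that produces the factor $(1-p^{-c})^{-1}$ in the stated probability, not a crude majorization of a single term. Once this multi-size union bound replaces your single-event reading, your plan coincides with the paper's proof.
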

\begin{proof}
From Lemma \ref{lem:BernouExpander}, we get that $\Phi$ is the adjacency matrix of a $(2s,1/7)$-expander graph with left degree $d$ enjoying \eqref{eq:DegreeBernou}, namely $\mathrm d_{min}\leq d\leq \mathrm d_{max}$ with $\mathrm d_{min}=759(1+c)\log p$ and $\mathrm d_{max}=828(1+c)\log p$. Observe that the left degree $d$ may depend on the vertex considered. However, note that the proof of Lemma \ref{lem:ExpanderUDP} can be extended to this case. Following the lines of Lemma 9 in \cite{berinde2008combining}, one can check that:
\begin{equation*}
\norm{\Phi\gamma_{T}}_{1}\geq \mathrm d_{min}(1-2(\mathrm d_{max}/\mathrm d_{min})e)\norm{\gamma_{T}}_{1}\,.
\end{equation*}
Similarly, one can check from the proof of Lemma \ref{lem:ExpanderUDP} that for all $\gamma\in\R^{p}$ and for all $T\subseteq\{1,\dotsc,p\}$ such that $\abs T\leq s$,
\begin{equation*}
\norm{\gamma_{T}}_{1}\leq\frac{\sqrt{s}\sqrt{\mathrm d_{max}}}{ \mathrm d_{min}(1-2(\mathrm d_{max}/\mathrm d_{min})e)}\norm{\Phi\gamma}_{2}+\frac{2\mathrm d_{max}e}{\mathrm d_{min}(1-2(\mathrm d_{max}/\mathrm d_{min})e)}\norm\gamma_{1}\,,
\end{equation*}
where $e=1/7$.
\end{proof}

\noindent 
We deduce the following result for Thresholded-Lasso using Bernoulli design matrices.

\begin{thm}[Exact recovery with Bernoulli designs]
Let $p>7$, $s>0$ and $c>1$. Let $\Phi\in\{0,1\}^{n\times p}$ a Bernoulli matrix drawn according to \eqref{e:loibern} with:
\begin{itemize}
\item $\mu=799(1+c){\log p}/n$,
\item $n\geq 12982(1+c)s\log p$,
\item $\epsilon\sim\mathcal N(0,\Sigma_{n})$ and the covariance diagonal entries enjoy $\Sigma_{i,i}\leq \sigma^{2}$.
\end{itemize}
 Let $\hat S$ be any solution to \eqref{e:defSest} with regularizing parameter:
\[
r\geq r_{1}:=9692\,\sigma\,(1+c)\frac{\log p}n\,,
\]
Then, with a probability greater than $1-3p^{1-c}$,
\begin{equation*}
\lVert\hat S-S\lVert_{\infty}\leq
775.36\,\Big[\frac r{r_{1}}\Big]\,\sigma\, s\,.
\end{equation*}
\end{thm}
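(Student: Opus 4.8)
The plan is to reproduce, for the Bernoulli ensemble, the mechanism already used for Theorem \ref{thm:tiebreak} and Theorem \ref{t:Expander}: assemble the three hypotheses of the master estimate Lemma \ref{lem:UDPimpliesSupportRecovery} (a Universal Distortion inequality, a coherence bound $\theta_1$, and a column-norm lower bound $\theta_2$), control the stochastic term $(1/n)\norm{\Phi^\top\epsilon}_\infty$, and then read off the $\ell^\infty$ bound. The advantage of this route is that all the design-dependent work has already been done in the preliminary lemmas, so the argument is essentially bookkeeping once the right parameters are identified.

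First I would invoke Lemma \ref{lem:BernouUDP}, which, under the stated choice $\mu=799(1+c)\log p/n$ and $n\geq 12982(1+c)s\log p$, supplies the UDP inequality with $\rho=0.0551/\sqrt{(1+c)\log p}$ and $\kappa=0.4529<1/2$, on an event of probability at least $1-(1+2p+(1-p^{-c})^{-1})p^{-c}$. Simultaneously I would invoke Lemma \ref{lem:BernouDegree} to obtain the coherence level $\theta_1=879(1+c)\log p/n$ and, crucially, the two-sided control of the column supports $759(1+c)\log p\leq\norm{\Phi_i}_2^2\leq 828(1+c)\log p$ from \eqref{eq:DegreeBernou}. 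The lower bound gives $\theta_2=759(1+c)\log p/n$ for Lemma \ref{lem:UDPimpliesSupportRecovery}, while the upper bound $d_{max}=828(1+c)\log p$ is what I will need in the noise step.

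Next comes the noise control. Writing $Z_i=(1/n)(\Phi^\top\epsilon)_i$, each $Z_i$ is centered Gaussian, and since $\Phi_i$ is binary and $\Sigma_{jj}\leq\sigma^2$ one has $\Var(Z_i)\leq\sigma^2\norm{\Phi_i}_2^2/n^2\leq\sigma^2 d_{max}/n^2$, exactly the estimate used in Theorem \ref{thm:tiebreak} and Theorem \ref{t:Expander}. Choosing $r_0$ proportional to $\sigma\sqrt{d_{max}}\sqrt{2c\log p}/n$ and combining a Gaussian tail bound with a union bound over the $p$ coordinates places the event $\{(1/n)\norm{\Phi^\top\epsilon}_\infty\leq r_0\}$ at probability at least $1-p^{1-c}$. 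Feeding $\rho,\kappa,\theta_1,\theta_2,r_0$ into Lemma \ref{lem:UDPimpliesSupportRecovery}, and imposing $r\geq r_1$ with $r_1$ a fixed multiple of $r_0/(1-2\kappa)$ large enough to keep the denominator $1-(r_0/r)-2\kappa$ bounded away from $0$, collapses the abstract bound $\theta_2^{-1}[1+r_0/r+2n\theta_1\rho^2 s/(1-(r_0/r)-2\kappa)]\,r$ into the announced $775.36\,[r/r_1]\,\sigma\,s$; the stated $r_1=9692\,\sigma(1+c)\log p/n$ is precisely this multiple of $r_0$. A final union bound over the UDP/degree event and the noise event yields the probability $1-3p^{1-c}$.

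The main obstacle, and the only genuinely new ingredient compared with the regular-expander Theorem \ref{t:Expander}, is the irregularity of the Bernoulli design: the column norms are random and no longer equal to a common degree $d$, so they must be controlled from both sides — below, to secure $\theta_2$ in the denominator of the error bound, and above, both to bound the noise variance and to keep the coherence $\theta_1\simeq 1.1\mu$ small. This is exactly what Lemma \ref{lem:BernouDegree} delivers (via Chernoff/Kullback--Leibler estimates) together with the irregular-degree extension of the expander argument inside Lemma \ref{lem:BernouUDP}, where $\mathrm d_{min}$ and $\mathrm d_{max}$ play the role of the single $d$ of Lemma \ref{lem:ExpanderUDP}. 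The residual difficulty is purely arithmetic: threading the numerical constants $759,828,879,0.0551,0.4529$ through Lemma \ref{lem:UDPimpliesSupportRecovery} so that they telescope into $9692$ and $775.36$.
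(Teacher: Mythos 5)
Your proposal is correct and follows essentially the same route as the paper's own proof: Lemma \ref{lem:BernouDegree} for the two-sided column-norm and coherence bounds, Lemma \ref{lem:BernouUDP} for the UDP constants $\rho=0.0551/\sqrt{(1+c)\log p}$ and $\kappa=0.4529$, a Gaussian tail plus union bound for the event $\{(1/n)\lVert\Phi^{\top}\epsilon\lVert_{\infty}\leq r_{0}\}$ with $r_{0}\asymp\sigma\sqrt{\mathrm d_{max}}\sqrt{c\log p}/n$, and finally Lemma \ref{lem:UDPimpliesSupportRecovery} with $\theta_{1}=879(1+c)\log p/n$, $\theta_{2}=759(1+c)\log p/n$. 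The only content you leave implicit is the arithmetic the paper carries out explicitly ($r>10.616\,r_{0}$, $r_{1}\geq 238.1\,r_{0}$, and the collapse of the bracket into $775.36\,[r/r_{1}]\,\sigma s$), which is indeed just bookkeeping.
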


\begin{proof}
\noindent
$\bullet$ Invoke Lemma \ref{lem:BernouDegree} to get that for all $i\in\{1,\ldots,p\}$,
\[
759(1+c)\log p\leq\lVert\Phi_{i}\lVert_{0}=\lVert\Phi_{i}\lVert_{2}^{2}\,\leq828(1+c)\log p\,,
\]
and:
\[
\max_{1\leq k\neq l\leq p}\frac1n\lvert\sum_{j=1}^{n}\Phi_{j,k}\Phi_{j,l}\lvert\ \leq 879(1+c)\frac{\log p}n\,.
\]
$\bullet$ Set $r_{0}=6\sigma({46c(1+c)})^{1/2}\,{\log p}/n$ and $Z_{i}=(1/n)\Phi^{\top}\epsilon$. Note $Z_{i}$ is centered Gaussian random variable with variance less than $\sigma^{2}\lVert\Phi_{i}\lVert_{2}^{2}/n^{2}$. Taking union bounds, it holds:
\begin{align*}
\mathbb P[(1/n)\lVert\Phi^{\top}\epsilon\lVert_{\infty}> r_{0}]&\leq\mathbb P[(1/n)\lVert\Phi^{\top}\epsilon\lVert_{\infty}> 6\sigma\sqrt{46c(1+c)}{\log p}/n]\,,\\
&\leq \sum_{i=1}^{p}\mathbb P[\lvert Z_{i}\lvert> 6\sigma\sqrt{46c(1+c)}{\log p}/n]\,,\\
&\leq \sum_{i=1}^{p}\mathbb P[\lvert Z_{i}\lvert> \sqrt{2c\log p}\,\sigma\lVert\Phi_{i}\lVert_{2}/n]\,,\\
&\leq p^{1-c}\,,
\end{align*}
using $\lVert\Phi_{i}\lVert_{2}^{2}\leq828(1+c)\log p$ and the fact that, for $\sqrt{2c\log p}\geq\sqrt{2\log 2} $, we have:
\[
\mathbb P[\lvert\mathcal N(0,1)\lvert>\sqrt{2c\log p}]\leq\frac1{\sqrt{\pi\log 2}}\exp(-c\log p)\leq p^{-c}\,.
\]
$\bullet$ From Lemma \ref{lem:BernouUDP}, it holds that for all $\gamma\in\R^{p}$ and for all $T\subseteq\{1,\dotsc,p\}$ such that $\abs T\leq s$,
\begin{equation*}
\norm{\gamma_{T}}_{1}\leq0.0551\Big[\frac{{s}}{{(1+c)\log p}}\Big]^{1/2}\norm{\Phi\gamma}_{2}+0.4529\norm\gamma_{1}\,.
\end{equation*}
\noindent
$\bullet$ Invoke Lemma \ref{lem:UDPimpliesSupportRecovery} with parameters $\rho=0.0551/\sqrt{(1+c)\log p}$, $\kappa=0.4529$, $\theta_{1}=879(1+c){\log p}/n$ and $\theta_{2}=759(1+c)\log p/n$, to get that for all regularizing parameter $r>10.616\,r_{0}$,
\begin{equation*}
\lVert\hat S-S\lVert_{\infty}\leq\frac 
n{759(1+c)\log p}\Big[1+\frac{r_{0}}r+\frac{5.338\,s}{0.0942-r_{0}/r}\Big]r\,,
\end{equation*}
on the event $\{(1/n)\lVert{\Phi^{\top}\epsilon}\lVert_{\infty}\leq r_{0}\}$. Finally, observe that $r_{1}\geq 238.1r_{0}$.
\end{proof}

\begin{cor}[Exact recovery under constant SNR hypothesis]
Let $p>7$, $s>0$ and $c>1$. Let $\Phi\in\{0,1\}^{n\times p}$ a Bernoulli matrix drawn according to \eqref{e:loibern} with:
\begin{itemize}
\item $\mu=799(1+c){\log p}/n$,
\item $n\geq n_{0}:=12982(1+c)s\log p$,
\item assume that \eqref{e:NoiseSparseGraphs} holds, namely $\epsilon\sim\mathcal N(0,\Sigma_{n})$ and the covariance diagonal entries enjoy $\Sigma_{i,i}\leq 759\,\tilde\sigma^{2}\,(1+c)\log p/n$ with $\tilde\sigma>0$.
\end{itemize}
 Let $\hat S$ be any solution to \eqref{e:defSest} with regularizing parameter:
\[
r\geq 0.1886\,\tilde\sigma\,\Big[\frac{12982(1+c)\log p}n\Big]^{3/2}\,,
\]
Then, with a probability greater than $1-3p^{1-c}$,
\begin{equation*}
\lVert\hat S-S\lVert_{\infty}\leq
195.82\,\tilde\sigma\,\Big[\frac r{r_{1}}\Big]\,\Big[\frac{n_{0}}n\Big]^{1/2} \sqrt s\,.
\end{equation*}
\end{cor}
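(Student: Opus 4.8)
The plan is to read off the corollary as a specialization of the preceding theorem (Exact recovery with Bernoulli designs), the single change being the noise level. First I would note that the constant-SNR hypothesis \eqref{e:NoiseSparseGraphs}, together with the column-support lower bound $\norm{\Phi_i}_2^2\geq 759(1+c)\log p$ furnished by Lemma \ref{lem:BernouDegree}, is exactly the assertion that the diagonal of $\Sigma_n$ is bounded by $\sigma^2$ with
\[
\sigma^2:=759\,\tilde\sigma^2\,(1+c)\frac{\log p}{n}.
\]
Since the value of $\mu$, the lower bound on $n$, and the assumptions $p>7$ and $c>1$ are identical to those of the theorem, the theorem applies verbatim with this $\sigma$, and in particular yields the probability $1-3p^{1-c}$ without modification.

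Next I would substitute $\sigma=\tilde\sigma\sqrt{759(1+c)\log p/n}$ into the theorem's threshold $9692\,\sigma(1+c)\log p/n$. Because $\sigma$ itself contributes a factor $[(1+c)\log p/n]^{1/2}$, the threshold becomes $9692\sqrt{759}\,\tilde\sigma\,[(1+c)\log p/n]^{3/2}$, which displays the $3/2$-power scaling appearing in the statement. A short numerical check that
\[
0.1886\cdot 12982^{3/2}\ \geq\ 9692\sqrt{759}
\]
then shows that the hypothesis $r\geq r_1=0.1886\,\tilde\sigma[12982(1+c)\log p/n]^{3/2}$ assumed here is stronger than the theorem's hypothesis on $r$, so the theorem's conclusion is available on the same event.

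It remains to recast the theorem's bound $\norm{\hat S-S}_\infty\leq 775.36\,[r/r_1^{\mathrm{thm}}]\,\sigma s$ into the advertised form, where $r_1^{\mathrm{thm}}$ denotes the theorem's threshold. The decisive identity is
\[
\sigma\,s=\tilde\sigma\sqrt{759(1+c)\frac{\log p}{n}}\;s=\sqrt{\frac{759}{12982}}\;\tilde\sigma\,\Big[\frac{n_0}{n}\Big]^{1/2}\sqrt s\,,
\]
obtained by using $n_0=12982(1+c)s\log p$ to absorb one factor $\sqrt s$ and the $[(1+c)\log p/n]^{1/2}$ factor into $[n_0/n]^{1/2}\sqrt s$. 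Multiplying the resulting constant by the ratio $r_1/r_1^{\mathrm{thm}}$ that converts $[r/r_1^{\mathrm{thm}}]$ to $[r/r_1]$ produces the leading constant $775.36\cdot\sqrt{759/12982}\cdot(r_1/r_1^{\mathrm{thm}})\approx 195.82$, which is the bound to be proved.

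I expect the only real obstacle to be the numerical accounting of constants, since all the analytic content is inherited from the theorem: one must verify that $0.1886\cdot 12982^{3/2}$ dominates $9692\sqrt{759}$ and that the leading factor genuinely rounds to $195.82$ after the $\sigma$-to-$\tilde\sigma$ substitution and the reparametrization of the threshold. Should a self-contained argument be preferred, an equivalent route is to reproduce the theorem's proof line by line and merely re-run its noise-control step with the variance of \eqref{e:NoiseSparseGraphs}; this regenerates the $[\log p/n]^{3/2}$ threshold directly via the Gaussian tail bound for $Z_i=(1/n)\Phi^\top\epsilon$.
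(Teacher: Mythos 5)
Your proposal is correct and follows essentially the same route as the paper: the paper's own two-line proof likewise treats the corollary as the preceding Bernoulli-design theorem specialized to $\sigma^{2}=759\,\tilde\sigma^{2}(1+c)\log p/n$, invoking Lemma \ref{lem:BernouDegree} only to identify the constant-SNR hypothesis with condition \eqref{e:NoiseSparseGraphs}, and leaves the constant bookkeeping implicit. Your numerical checks are sound (threshold domination since $0.1886\cdot 12982^{3/2}\approx 2.79\times 10^{5}\geq 9692\sqrt{759}\approx 2.67\times 10^{5}$, and leading constant $775.36\cdot 0.1886\cdot 12982/9692\approx 195.9$, matching the stated $195.82$ up to the paper's own rounding of its constants).
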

\begin{proof}
Eq. \eqref{eq:DegreeBernou} shows that:
\[
759(1+c)\log p\leq\lVert\Phi_{i}\lVert_{2}^{2}\,\leq828(1+c)\log p\,,
\]
and so $\Sigma_{i,i}\leq\tilde\sigma^{2}\lVert\Phi_{i}\lVert_{2}^{2}/n$, with high probability.
\end{proof}

\subsection{Rademacher Designs}
The result and the proof given on Page \ref{proof:RadeBreak} rely on the following lemmas.
\label{App:RadeClassics}
\begin{lem}[Rademacher designs satisfy RIP] 
\label{lem:RademacherRIP}
There exists universal constants $c_{1},c_{2},c_{3}$ such that the following holds. Let $\delta\in(0,1)$ and $p,n,s'>0$ such that:
\[
s'=\Big\lfloor\frac{c_{1}\delta^{2}n}{\log(c_{2}p/(\delta^{2}n))}\Big\rfloor\,,
\]
then, with probability at least $1-2\exp(-c_{3}n)$, a matrix $\Phi\in\{\pm1\}^{n\times p}$ drawn according to the Rademacher model \eqref{e:loirade} enjoy the RIP property, namely for all $\gamma\in\R^{p}$ such that $\lVert\gamma\lVert_{0}\leq s'$,
\[
n(1-\delta)^{2}\lVert\gamma\lVert_{2}^{2}\leq\lVert\Phi\gamma\lVert_{2}^{2}\leq n(1+\delta)^{2}\lVert\gamma\lVert_{2}^{2}\,.
\]
\end{lem}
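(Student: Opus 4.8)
The plan is to follow the classical route to the restricted isometry property for sub-Gaussian designs: first a concentration estimate for the squared norm $\norm{\Phi\gamma}_2^2$ at a single fixed vector, then a uniform bound over all vectors supported on a fixed index set via a covering (net) argument, and finally a union bound over all supports of size $s'$. The whole point is that the definition of $s'$ is calibrated so that the combinatorial/geometric entropy of the supports is exactly absorbed by the exponential concentration rate; everything else is routine.

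First I would establish concentration for one fixed vector. Fix $\gamma\in\R^{p}$ with $\norm{\gamma}_2=1$ and write $\norm{\Phi\gamma}_2^2=\sum_{j=1}^{n}X_j^2$, where $X_j=\sum_{i=1}^{p}\Phi_{ji}\gamma_i$. Since the rows of $\Phi$ are independent symmetric-sign vectors, each $X_j$ is a Rademacher sum, hence (Hoeffding's lemma) sub-Gaussian with variance proxy $\norm{\gamma}_2^2=1$ and $\mathbb E(X_j^2)=1$. Thus each $X_j^2-1$ is centered and sub-exponential with a universal sub-exponential norm, and Bernstein's inequality for the i.i.d.\ sum gives a universal $c_0>0$ with
\begin{equation*}
\mathbb P\Big(\big\lvert\tfrac1n\norm{\Phi\gamma}_2^2-1\big\rvert\geq t\Big)\leq 2\exp(-c_0 n t^2),\qquad 0<t<1 .
\end{equation*}
In particular, at level $t=\delta/2$ the right-hand side is $2\exp(-\tfrac{c_0}{4}\delta^2 n)$.

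Next I would fix a support $T\subseteq\{1,\dotsc,p\}$ with $\abs T=s'$ and control the deviation uniformly over the unit sphere $S_T$ of vectors supported on $T$. I take an $\eta$-net $\mathcal N$ of $S_T$ with $\abs{\mathcal N}\leq(1+2/\eta)^{s'}$ (volumetric bound), apply the single-vector estimate at level $\delta/2$ at each net point, and union-bound over $\mathcal N$. A standard net-to-sphere step then upgrades control on $\mathcal N$ to all of $S_T$: decomposing any $x\in S_T$ as a net point plus a residual of norm $\leq\eta$ lying again in the span of $T$, one obtains $\sup_{x\in S_T}\lvert\tfrac1{\sqrt n}\norm{\Phi x}_2-1\rvert\leq\delta$ provided the net resolution is a small fixed multiple of $\delta$, i.e.\ $\eta\asymp\delta$. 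Squaring reproduces the two-sided bound $n(1-\delta)^2\norm{x}_2^2\leq\norm{\Phi x}_2^2\leq n(1+\delta)^2\norm{x}_2^2$ on $T$.

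Finally I would union-bound over the $\binom{p}{s'}\leq(ep/s')^{s'}$ choices of $T$, bounding the total failure probability by
\begin{equation*}
2\Big(\frac{ep}{s'}\Big)^{s'}\Big(1+\tfrac{2}{\eta}\Big)^{s'}\exp\!\Big(-\tfrac{c_0}{4}\delta^2 n\Big)
=2\exp\!\Big(s'\log\tfrac{ep}{s'}+s'\log\big(1+\tfrac2\eta\big)-\tfrac{c_0}{4}\delta^2 n\Big).
\end{equation*}
The main obstacle — and the only place the precise form of $s'$ matters — is to verify that the two entropy terms together are at most $\tfrac{c_0}{8}\delta^2 n$, leaving a net exponent $\leq-c_3 n$. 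Since $\eta\asymp\delta$, the net contributes a $\log(1/\delta)$ factor, which is exactly why the denominator of $s'$ reads $\log(c_2 p/(\delta^2 n))$ rather than $\log(p/s')$: in the high-dimensional regime $p\gtrsim\delta^2 n$ one checks $\log(1/\delta)\lesssim\log(p/(\delta^2 n))$, and with $s'=\lfloor c_1\delta^2 n/\log(c_2 p/(\delta^2 n))\rfloor$ a short computation gives $\log(p/s')\asymp\log(c_2 p/(\delta^2 n))$ (the $\log\log$ correction absorbed into $c_2$). Hence $s'\big[\log\tfrac{ep}{s'}+\log(1+\tfrac2\eta)\big]\lesssim c_1\delta^2 n$, which is made $\leq\tfrac{c_0}{8}\delta^2 n$ by choosing $c_1$ small enough; this simultaneously fixes the universal constants $c_1,c_2,c_3$ and yields the probability $1-2\exp(-c_3 n)$. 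The delicate point is thus this logarithmic self-consistency, namely that the implicit definition of $s'$ makes all the logarithmic factors comparable to $\log(c_2 p/(\delta^2 n))$ uniformly over the admissible range of $p,n,\delta$.
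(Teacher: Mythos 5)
The paper itself does not prove this lemma: its ``proof'' is a one-line citation to Example 2.6.3 and Theorem 2.6.5 of \cite{chafai2011interactions}. Your argument --- Bernstein concentration of $\tfrac1n\norm{\Phi\gamma}_2^2$ at a fixed unit vector, an $\eta$-net over the unit sphere of a fixed support, a union bound over the $\binom{p}{s'}$ supports, with the definition of $s'$ calibrated so that the entropy is absorbed by the concentration exponent --- is precisely the standard proof underlying that citation, so in substance you are reconstructing the intended argument rather than taking a different route, and the architecture is sound.

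Two caveats on the details. First, with net resolution $\eta\asymp\delta$ the entropy term $s'\log(1+2/\eta)$ carries a factor $\log(1/\delta)$, and your claim that $\log(1/\delta)\lesssim\log\bigl(c_2p/(\delta^2n)\bigr)$ already ``in the regime $p\gtrsim\delta^2 n$'' is too weak: since $\log\bigl(c_2p/(\delta^2n)\bigr)=\log(c_2p/n)+2\log(1/\delta)$, the absorption actually requires $p\gtrsim n$ (up to the choice of $c_2$). This is harmless in the paper's high-dimensional setting $p\gg n$, and it can be avoided altogether by the usual device of a net of \emph{fixed} resolution $1/4$ combined with the quadratic-form extension bound $\sup_{\norm{x}_2=1}\abs{x^\top Bx}\leq 2\max_{x_0\in\mathcal N}\abs{x_0^\top Bx_0}$ applied to $B=\tfrac1n\Phi_T^\top\Phi_T-I$, whose entropy cost $s'\log 9$ is $\delta$-free. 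Second, what your computation actually delivers is failure probability $2\exp(-c\,\delta^2 n)$, so the exponent constant is \emph{not} universal in $\delta$, contrary to the lemma's literal wording (and indeed a $\delta$-free $c_3$ cannot be obtained: already for $2$-sparse vectors the deviation of $\tfrac1n\sum_j\Phi_{j1}\Phi_{j2}$ is of size $\delta$ only with probability $\approx\exp(-\delta^2n/2)$). That defect sits in the statement, which is looser than the cited theorem, not in your proof; it is immaterial for the paper, which invokes the lemma only at the fixed value $\delta=9/50$, so the factor $\delta^2$ is absorbed into the constant $C_3$.
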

\begin{proof}
Numerous authors have proved this result, see Example 2.6.3 and Theorem 2.6.5 in \cite{chafai2011interactions} for instance.
\end{proof}
\begin{lem}[Rademacher designs satisfy UDP] 
\label{lem:RademacherUDP}
There exists universal constants $c_{1},c_{2},c_{3}$ such that the following holds. Let $\delta\in(0,\sqrt 2-1)$ and $s>0$ such that:
\[
5s\leq s':=\Big\lfloor\frac{c_{1}\delta^{2}n}{\log(c_{2}p/(\delta^{2}n))}\Big\rfloor\,,
\]
then, with probability at least $1-2\exp(-c_{3}n)$, a matrix $\Phi\in\{\pm1\}^{n\times p}$ drawn according to the Rademacher model \eqref{e:loirade} enjoy for all $\gamma\in\R^{p}$ and for all $T\subseteq\{1,\dotsc,p\}$ such that $\abs T\leq s$,
\begin{equation*}
\norm{\gamma_{T}}_{1}\leq\rho\sqrt s\norm{\Phi\gamma}_{2}+\kappa\norm\gamma_{1}\,.
\end{equation*}
where:
\begin{itemize}
\item $1/2>\kappa>\big({1+2(({1-\delta})/({1+\delta}))^{\frac12}}\big)^{-1}$,
\item $\sqrt n\rho=\big(\sqrt{1-\delta}+({\kappa_0\!-\!1})/({2\kappa_0})\sqrt{1+\delta}
\big)^ { -1}$\,.
\end{itemize}

\end{lem}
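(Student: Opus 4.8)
The plan is to obtain the UDP inequality as a purely \emph{deterministic} consequence of the restricted isometry property, so that all the randomness (and hence the probability $1-2\exp(-c_3 n)$) is inherited verbatim from Lemma~\ref{lem:RademacherRIP}. Concretely, I would invoke Lemma~\ref{lem:RademacherRIP} at the sparsity level $s'$, which on an event $\mathcal E$ of probability at least $1-2\exp(-c_3 n)$ furnishes the two-sided bounds $\sqrt n(1-\delta)\norm{\gamma}_2\leq\norm{\Phi\gamma}_2\leq\sqrt n(1+\delta)\norm{\gamma}_2$ for every $s'$-sparse $\gamma$. On $\mathcal E$ I then argue for a fixed but arbitrary $\gamma\in\R^p$ and $T$ with $\abs T\leq s$, and the whole point of the hypothesis $5s\leq s'$ is to guarantee that every index set on which I invoke the RIP stays within the admissible order.

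For the deterministic core I would first reduce to the worst case: since $\abs T\leq s$, we have $\norm{\gamma_T}_1\leq\norm{\gamma_{T_0}}_1$, where $T_0$ indexes the $s$ largest-magnitude entries of $\gamma$, and Cauchy--Schwarz gives $\norm{\gamma_{T_0}}_1\leq\sqrt s\,\norm{\gamma_{T_0}}_2$; so it suffices to control $\norm{\gamma_{T_0}}_2$. I would sort $T_0^c$ by decreasing magnitude and cut it into consecutive blocks, apply the RIP lower bound to the head (the union of $T_0$ with the first block, whose size stays below $5s\leq s'$) and the RIP upper bounds to the tail blocks, and combine these with the shelling estimate $\norm{\gamma_{T_{j+1}}}_2\leq\norm{\gamma_{T_j}}_1/\sqrt{(\text{block size})}$, which telescopes the tail into a constant multiple of $\norm{\gamma}_1/\sqrt s$. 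The reverse triangle inequality $\norm{\Phi\gamma}_2\geq\norm{\Phi\gamma_{\mathrm{head}}}_2-\sum_{\mathrm{tail}}\norm{\Phi\gamma_{T_j}}_2$ then yields a lower bound for $\norm{\Phi\gamma}_2$ of the shape (head coefficient)$\cdot\norm{\gamma_{T_0}}_2-$(tail coefficient)$\cdot\norm{\gamma}_1/\sqrt s$, which after rearrangement is exactly a bound of UDP type.

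The hard part is the constant calibration, and it is genuinely what makes the statement nontrivial. A naive triangle-inequality split only produces $\kappa>1/2$, which is useless for the downstream Lemmas~\ref{lem:ConsistL1} and~\ref{lem:UDPimpliesSupportRecovery}; the characteristic denominator $\sqrt{1-\delta}+\tfrac{\kappa_0-1}{2\kappa_0}\sqrt{1+\delta}$ of $\rho$ appears only after one reabsorbs part of the first tail block into the head (via Cauchy--Schwarz), so that the head contributes the factor $\sqrt{1-\delta}$ while the reabsorbed portion contributes $-\tfrac{1-\kappa_0}{2\kappa_0}\sqrt{1+\delta}$, with $\kappa_0$ the free block-size/threshold parameter. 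The requirement that this denominator be strictly positive is \emph{literally} the stated lower bound $\kappa>(1+2\sqrt{(1-\delta)/(1+\delta)})^{-1}$, and one then takes $\kappa_0=\kappa<1/2$; the restriction $\delta<\sqrt2-1$ is what keeps the RIP constants in the range where this balancing closes and the feasible $\kappa$-window stays nonempty. Since this sharp RIP$\Rightarrow$UDP passage is precisely the computation carried out in \cite{de2012remark}, I would follow that argument line by line, plugging in the Rademacher RIP of Lemma~\ref{lem:RademacherRIP} in place of a generic isometry; the only real work is propagating the $\sqrt{1\pm\delta}$ factors faithfully and checking that the $5s\leq s'$ budget is never exceeded.
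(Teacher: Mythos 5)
Your proposal is correct and follows exactly the paper's route: the paper's proof is precisely ``Lemma~\ref{lem:RademacherRIP} (Rademacher RIP) combined with the deterministic RIP$\Rightarrow$UDP passage of Proposition~3.1 in \cite{de2012remark}'', which is what you invoke, with the added value that you also sketch the internal block-decomposition and constant calibration that the paper leaves to the citation.
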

\begin{proof}
The proof follows from Lemma \ref{lem:RademacherRIP} and Proposition 3.1 in \cite{de2012remark}.
\end{proof}

\begin{lem}
\label{lem:RademacherCorrelationPredictors}
Let $c,C_{1}>0$ and $p,n,s>0$ such that $n\geq C_{1}s\log p$ and $s\geq 3(2+c)/C_{1}$. Then, with probability greater than $1-2p^{-c}$, it holds for all $k\neq l\in\{1,\ldots,p\}$,
\[
\frac1n\lvert\sum_{i=1}^{n}\Phi_{i,k}\Phi_{i,l}\vert\leq \Big[\frac{(2+c)8}{3C_{1}}\Big]^{1/2}\frac1{\sqrt s}\,.
\]
\end{lem}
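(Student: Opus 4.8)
The plan is to recognize that, for each fixed pair $k\neq l$, the summands $\Phi_{i,k}\Phi_{i,l}$ ($i=1,\dots,n$) are independent symmetric signs: since $\Phi_{i,k}$ and $\Phi_{i,l}$ are independent symmetric Rademacher variables, their product is again a symmetric $\pm1$ variable, and independence across rows is inherited from \eqref{e:loirade}. Thus $\sum_{i=1}^{n}\Phi_{i,k}\Phi_{i,l}$ is a centred sum of $n$ i.i.d.\ variables, each bounded by $1$ and of variance $1$, so the whole lemma reduces to a one-dimensional tail bound for such a sum followed by a union bound over the at most $p^{2}$ pairs. I would apply Bernstein's inequality rather than plain Hoeffding, because its variance-sensitive form is exactly what produces the constant $8/3$ that appears in the statement.

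Concretely, set $\theta:=\big[\tfrac{8(2+c)}{3C_{1}}\big]^{1/2}s^{-1/2}$ and apply Bernstein to $\sum_i \Phi_{i,k}\Phi_{i,l}$ at level $t=n\theta$. With a.s.\ bound $M=1$ and total variance $V=n$, Bernstein yields a tail $2\exp\!\big(-\tfrac{(n\theta)^{2}/2}{n+n\theta/3}\big)=2\exp\!\big(-\tfrac{n\theta^{2}/2}{1+\theta/3}\big)$. This is where the hypothesis $s\geq 3(2+c)/C_{1}$ enters: it forces $\theta^{2}=\tfrac{8(2+c)}{3C_{1}s}\leq \tfrac{8}{9}$, hence $\theta\leq 1$ and $1+\theta/3\leq \tfrac{4}{3}$, so the exponent is at least $\tfrac{3}{8}n\theta^{2}=\tfrac{(2+c)n}{C_{1}s}$. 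Invoking the sample-size hypothesis $n\geq C_{1}s\log p$ then bounds this exponent below by $(2+c)\log p$, so each single pair violates the target inequality with probability at most $2p^{-(2+c)}$.

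Finally I would union-bound over the (at most $p^{2}$) pairs $k\neq l$: the failure probability is at most $p^{2}\cdot 2p^{-(2+c)}=2p^{-c}$, which is exactly the claimed bound. The argument is conceptually short; the only real care needed---the ``main obstacle'' in the sense of this proof---is the constant bookkeeping, namely choosing Bernstein over Hoeffding and checking that the factor $8/3$ inside $\theta$ conspires with Bernstein's $1/3$ term and the threshold $\theta\le 1$ (guaranteed precisely by $s\geq 3(2+c)/C_{1}$) to collapse the exponent to the clean value $(2+c)n/(C_{1}s)$, and that the arithmetic $2-(2+c)=-c$ in the union bound leaves exactly the exponent $-c$.
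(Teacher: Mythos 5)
Your proposal is correct and follows essentially the same route as the paper: both recognize that the products $\Phi_{i,k}\Phi_{i,l}$ are i.i.d.\ Rademacher variables, apply Bernstein's inequality (in the simplified form $2\exp(-\tfrac{3n}{8}t^{2})$ valid for $t\le 1$, which is exactly what your bound $1+\theta/3\le 4/3$ produces), and union bound over the at most $p^{2}$ pairs. Your write-up in fact makes explicit a detail the paper leaves implicit, namely that the hypothesis $s\geq 3(2+c)/C_{1}$ is precisely what guarantees $t<1$ so that this form of Bernstein applies.
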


\begin{proof}
Let $k\neq l\in\{1,\ldots,p\}$. Set $X_{i}=\Phi_{i,k}\Phi_{i,l}$ and observe that $X_{i}$ are independent Rademacher random variables. From Bernstein's inequality, it holds for all $0<t<1$,
\[
\mathbb P\Big(|\frac1n\sum_{i=1}^{n}X_{i}|\geq t\Big)\leq2\exp\Big(-\frac{3n}8t^{2}\Big)\,.
\]
Set $t=({(2+c)8}/(3sC_{1}))^{1/2}$ and observe $\#\{k\neq l\}\leq \exp(2\log p)$.
\end{proof}

\bibliographystyle{plain}
\bibliography{biblio}

\end{document}